\newcommand{\hide}[1]{}
\newcommand{\ABox}{
\raisebox{3pt}{\framebox[6pt]{\rule{6pt}{0pt}}}
}
\newenvironment{proof}{{\bf Proof:}}{\hfill\ABox}
\newtheorem{theorem}{{\bf Theorem}}
\newtheorem{cor}{Corollary}
\newtheorem{lemma}{Lemma}
\newtheorem{prop}{Proposition}
\newcommand{\lemlab}[1]{\label{lemma:#1}}
\newcommand{\thmlab}[1]{\label{thm:#1}}
\newcommand{\proplab}[1]{\label{prop:#1}}
\newcommand{\corlab}[1]{\label{cor:#1}}
\newcommand{\figlab}[1]{\label{fig:#1}}
\newcommand{\seclab}[1]{\label{sec:#1}}
\newcommand{\lemref}[1]{\ref{lemma:#1}}
\newcommand{\thmref}[1]{\ref{thm:#1}}
\newcommand{\corref}[1]{\ref{cor:#1}}
\newcommand{\secref}[1]{\ref{sec:#1}}
\newcommand{\figref}[1]{\ref{fig:#1}}
\def\a{{\alpha}}
\def\b{{\beta}}
\def\q{{\theta}}
\def\g{{\gamma}}
\newcommand{\squeezelist}{\setlength{\itemsep}{0pt}}
\title{%
Polar Zonohedra Edge-Unfold to Nets
} 
\author{%
Joseph O'Rourke%
  \thanks{Departments of Computer Science and of Mathematics, Smith College, Northampton, MA
      01063, USA.
      \protect\url{jorourke@smith.edu}.}
}
\date{\today}
\begin{document}
\maketitle

\begin{abstract}
This note proves that
every polar zonohedron has an edge-unfolding to a non-overlapping net.
\end{abstract}

\section{Introduction}
\seclab{Introduction}
What has become known as D{\"u}rer's Problem~\cite{o-dp-13}
asks whether or not every convex polyhedron $P$ has an edge-unfolding---a spanning tree
of the $1$-skeleton which when cut, unfolds the surface of $P$ to a planar
simple polygon known as a \emph{net}.
A key requirement is that a net does not self-overlap.\footnote{
Even though redundant, to block misinterpretation we sometimes call it
a \emph{non-overlapping net}.}
Despite considerable effort since Shephard first posed the question formally~\cite{s-cpcn-75},
there are only finite classes of polyhedra  
($5$ Platonic solids, $13$ Archimedean solids, $92$ Johnson solids),
and a few infinite classes of polyhedra known to edge-unfold to a net:
prisms, domes~\cite[p.~323]{do-gfalop-07}, 
prismoids~\cite{o-upwo-01}~\cite[p.~325]{do-gfalop-07}, 
nested prismatoids~\cite{radons2021edge}.

A \emph{zonohedron} is a centrally symmetric convex polyhedron that may be defined
as the Minkowski sum of a set of line segments.
A \emph{polar zonohedron} is formed when the generating segments constitute an
equally spaced and equal (unit) length star (or ``umbrella") at one of its two poles.
It is symmetric about the line through the two poles;
we'll assume this line is vertical.
It is also symmetric about a plane orthogonal to and through the midpoint of the poles line.
All of its edges have the same length;
all of its faces are rhombs.
A polar zonohedron is determined by two parameters:
$n$, the number of rhombs incident to each pole,
and $\q \in (0,\pi/2)$, the angle that measures the degree of closure at the poles.
Thus there is a continuum of polar zonohedra.
Small $\q$ produce pancake-like polyhedra, and $\q$ near $\pi/2$ produce thin cigar-shaped polyhedra.
See Fig.~\figref{PZ_Unf_n8_50deg}.
\begin{figure}[htbp]
\centering
\includegraphics[width=1.1\linewidth]{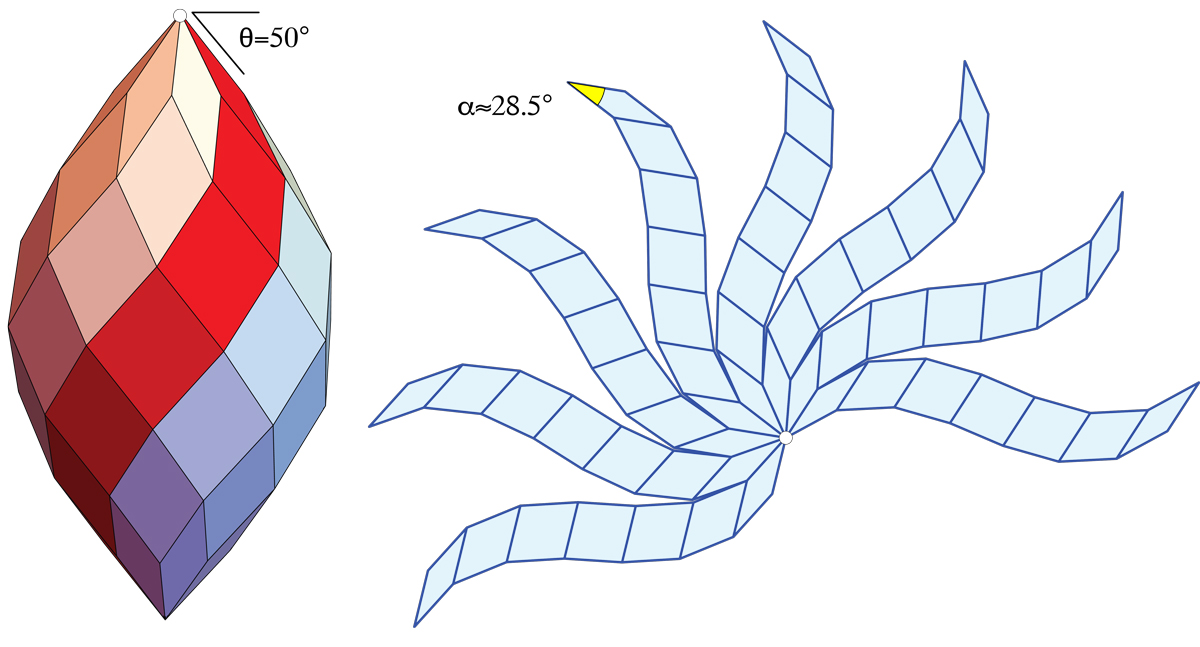}
\caption{$n=8, \q=50^\circ$. 
One zone $\mathcal{Z}$ of $(n-1)=7$ rhombs is marked in red.}
\figlab{PZ_Unf_n8_50deg}
\end{figure}

The aim of this note is to prove this theorem:
\begin{theorem}
\thmlab{net}
Every polar zonohedron has an edge-unfolding to a (non-overlapping) net.
\end{theorem}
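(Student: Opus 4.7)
The plan is to choose a spanning tree of cuts that flattens each of the $n$ ``zones'' (spirals) of the polar zonohedron into its own planar piece, and then to verify that the resulting planar figure is a simple polygon.

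I would first set up the combinatorics. A polar zonohedron with parameters $(n,\q)$ has $n(n-1)$ rhombs that partition into $n$ zones $Z_0,\dots,Z_{n-1}$, each a chain of $n-1$ rhombs sharing a common edge direction and running from one pole to the other (the red zone in Fig.~\figref{PZ_Unf_n8_50deg} is one such chain). Each rhomb lies in exactly one zone: the two edges in the zone direction are ``within-zone'' edges (shared with its neighbors along the chain, except at the two poles), while the other two edges are shared with rhombs of other zones.

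Next, I would specify the cut set. Designate $Z_0$ as the \emph{spine}; leave un-cut every within-zone edge, and, for each $i \neq 0$, leave un-cut exactly one cross edge as a \emph{hinge} attaching $Z_i$ either directly to $Z_0$ or cascading through earlier-attached zones; cut everything else. A count ($n(n-2)+(n-1) = n(n-1)-1$ un-cut edges) confirms that the dual face-graph is a tree, so the cut set is a spanning tree of the $1$-skeleton and the recipe is a valid edge-unfolding. In the planar image, $Z_0$ lies flat as a strip and each branch $Z_i$ rotates rigidly about its hinge into the plane as another flat strip.

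The main obstacle is establishing non-overlap. My first sub-task would be to verify that each zone, when flattened in isolation, is already a simple planar region; this should follow from the polar symmetry of the consecutive rhomb angles in a zone, which makes the ``top'' and ``bottom'' boundary polylines of the unfolded zone monotone in a controlled sense. My second sub-task would be to enclose each branch $Z_i$ in a bounding region (a wedge or half-plane) anchored at its hinge, and then to show that these regions, together with the spine, are pairwise interior-disjoint. The strictly positive angular defect at each pole---guaranteed by convexity and computable in closed form from $n$ and $\q$---should provide the angular slack needed to separate branches whose hinges cluster near a pole. I expect the trickiest regime to be small $\q$ (the ``pancake'' case), where the strips become long and nearly collinear so that even small angular deviations could cause collisions; here a cascading attachment scheme together with an induction on $n$ that reduces non-overlap to pairwise disjointness of neighboring branches is likely the cleanest route.
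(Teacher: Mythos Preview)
Your cut set essentially matches the paper's: each zone becomes a planar chain, and the natural hinges are the $n$ edges at the north pole, so that consecutive zones share a pole edge and the whole unfolding is an $n$-armed fan about the pole image $o$. The substance of the theorem is entirely in the non-overlap argument, and here your plan has a real gap.

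You propose to enclose each flattened zone in a wedge (or half-plane) anchored at its hinge and then argue the wedges are pairwise disjoint, using the positive angular defect at the pole for slack. This does not work as stated. Adjacent zones $Z$ and $Z'$ share an edge at $o$, so there is \emph{zero} angular gap between them there; the pole defect $2\pi-n\alpha$ is the total gap around the fan and sits between the first and last arms, not between consecutive ones. Worse, a flattened zone is $S$-shaped (the lower half curls back), so it does not fit inside any sector of opening $\alpha$ at $o$, and sectors wider than $\alpha$ for consecutive zones will overlap. So a global wedge enclosure cannot separate neighbors.

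The paper's route is more delicate and is exactly the missing idea. It observes that $Z'$ is the rotation of $Z$ about $o$ by $\alpha$, and then proves a \emph{radius-by-radius} bound: for every $r$, the angular extent $\beta(r)$ of $\mathcal{C}(r)\cap Z$ (the shortest arc covering that intersection) is at most $\alpha$. This is strictly weaker than ``$Z$ lies in an $\alpha$-sector,'' yet it suffices for non-overlap because the comparison is with a congruent rotated copy. Establishing $\beta(r)\le\alpha$ is where the work is, and it is not automatic: for small $\theta$ one gets a nearly flat rhomb $R_{n/2}$, the arc $\mathcal{C}(r)\cap Z$ can be disconnected, and the degenerate case $\theta=0$ requires a separate argument. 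Your ``monotone boundaries'' and ``induction on $n$'' suggestions do not obviously produce this bound; the paper instead argues via chord/diagonal geometry inside individual rhombs and a careful analysis of the transitional rhomb.
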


The unfolding is illustrated in Fig.~\figref{PZ_Unf_n8_50deg}.
It unfolds each of the $n$ zones, each joined at the image $o$ of the north pole.
A \emph{zone} of a polar zonohedron is a collection of rhombs all sharing
a set of parallel edges. There are $n$ such congruent zones comprising the surface,
each containing $(n-1)$ rhombs. 
The edges cut between each pair of adjacent zones form
what George Hart calls a ``surface helix."
Two further examples are shown in Figs.~\figref{PZ_Helix_n20_theta_05}
and~\figref{Octopus_n12_q05}

I rely on Hart's thorough exposition in~\cite{HartPZono},
as well as employing his Mathematica software.
See also~\cite{RussellTowle}.

\begin{figure}[htbp]
\centering
\includegraphics[width=1.1\linewidth]{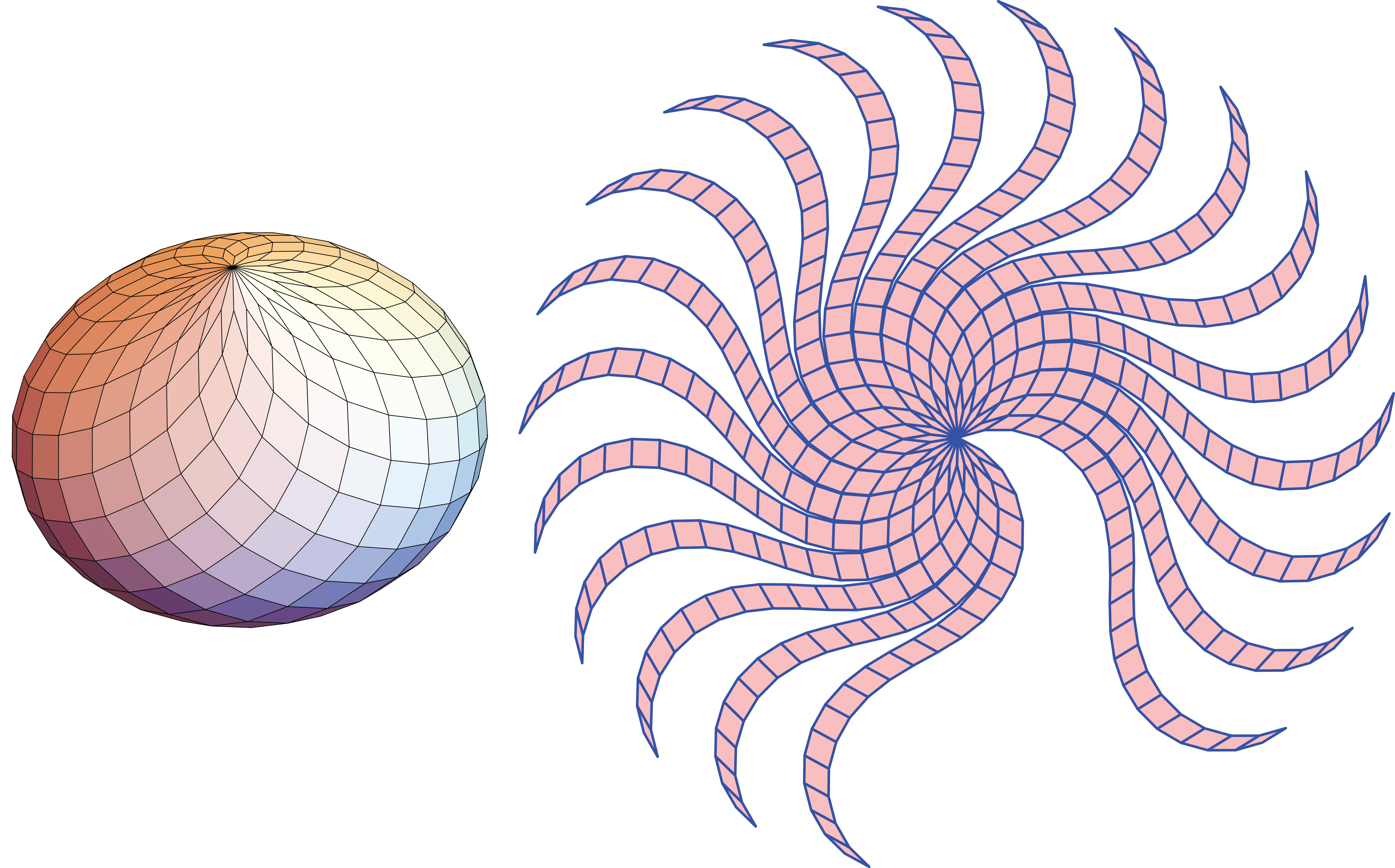}
\caption{$n=20$, $\q = 0.5 \approx 28.6^\circ$, $\a \approx 15.8^\circ$.}
\figlab{PZ_Helix_n20_theta_05}
\end{figure}
\begin{figure}[htbp]
\centering
\includegraphics[width=0.75\linewidth]{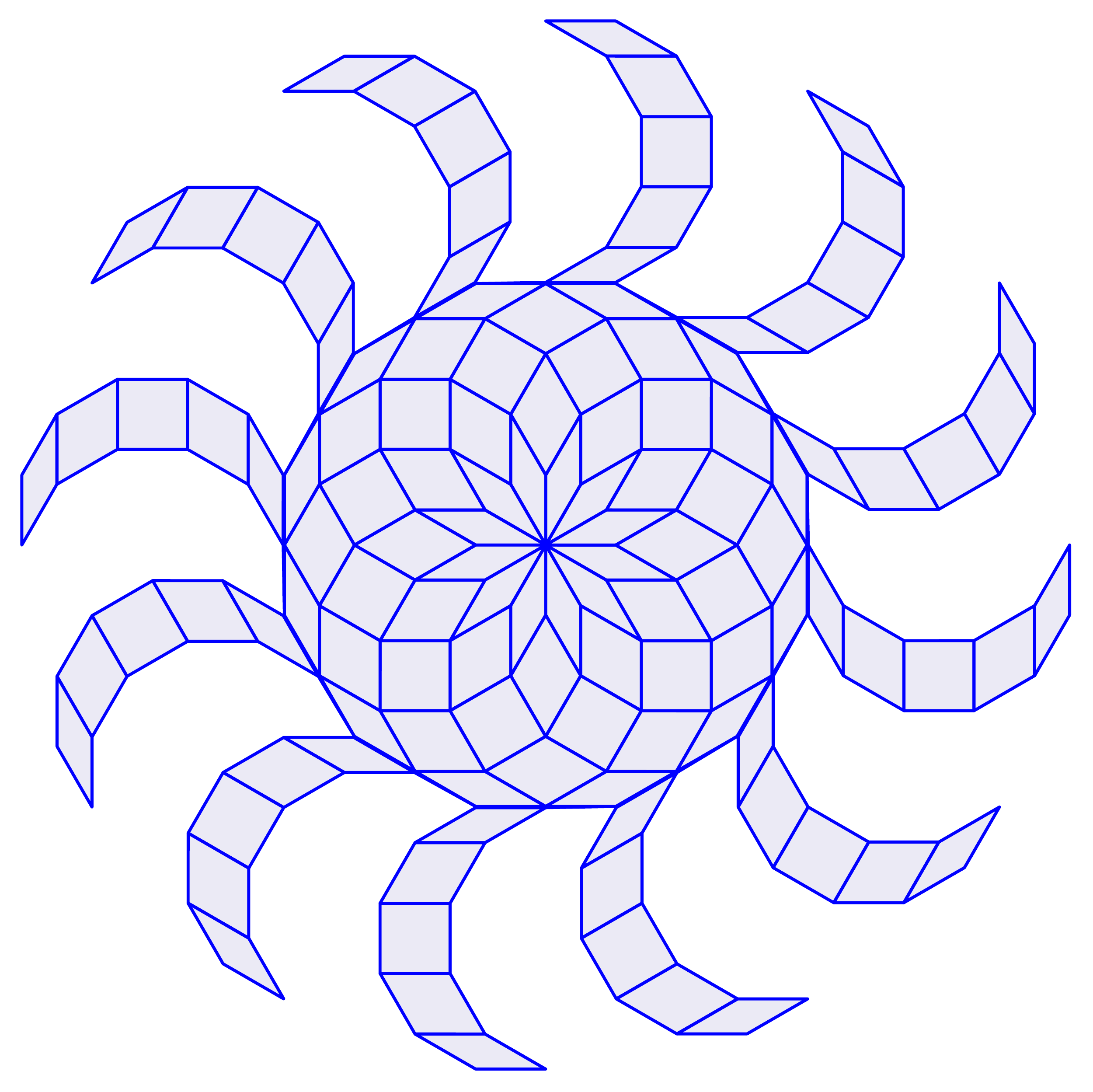}
\caption{$n=12$, $\q = \frac{1}{2}^\circ$.}
\figlab{Octopus_n12_q05}
\end{figure}

\section{Notation}
\seclab{Notation}
Distinguish between a zone $\mathcal{Z}$ in $\mathbb{R}^3$
and that zone unfolded, or developed in the plane as $Z$.
Each $Z$ is composed of $(n-1)$ rhombs:
$R_1,R_2,\ldots,R_{n-1}$.
Let $\a$ be the angle of $R_1$ at the corner that is incident to the (north) pole point.

Let $o$ be the planar image of the north pole.
We focus on two consecutive unfolded zones, because if adjacent zones do not overlap,
then there is no overlap in the entire unfolding.
All zones are congruent in $\mathbb{R}^3$ and therefore congruent  in their planar unfoldings.
Let $Z$ be one zone, and $Z'$ its counterclockwise (ccw) neighbor sharing an edge incident
to the pole $o$.
See Fig.~\figref{TwoZ_n16_x3}.
\begin{figure}[htbp]
\centering
\includegraphics[width=1.1\linewidth]{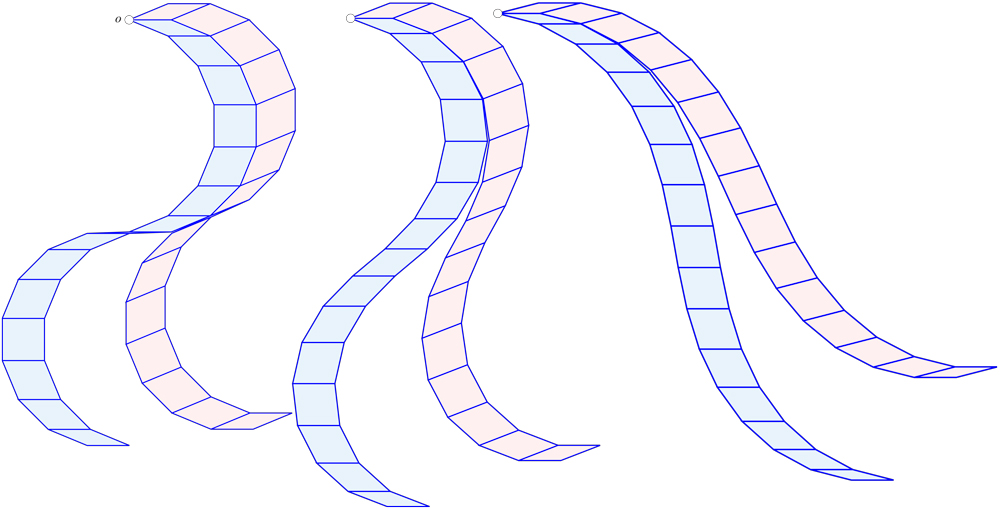}
\caption{$Z$ and $Z'$ for $n=16$. Left to right: $\q=1^\circ,20^\circ,50^\circ$.}
\figlab{TwoZ_n16_x3}
\end{figure}

Arrange $Z$ so that 
the top edge of $R_1$ is horizontal.
Then $Z'$ is $Z$ rigidly rotated about $o$ by $\a$,
because the zones are congruent and angle $\a$ is incident to the pole $o$.
In this orientation all the $n$ unit edges of $Z$ are horizontal,
all the unit edges of $Z'$ at angle $\a$.

\section{Proof Plan}
\seclab{ProofPlan}
Because $Z'$ is just a rigid rotation of $Z$ about $o$, it seems almost obvious that they cannot
overlap.
But I have not found a simple proof.

However, the overall plan of the proof presented here is simple.
First we establish a condition on the shape of $Z$ that guarantees non-overlap.
Let $| \mathcal{C}(r) \cap Z | = \b(r)$ be the angular measure of the arc of the
circle $\mathcal{C}(r)$ centered on $o$ that intersects $Z$.
(This definition is revisited and clarified in the next section.) 
Then the shape condition is that, if $\b(r) \le \a$ for all $r$, then $Z$ and the rotated $Z'$
do not overlap.

The remainder of the proof calculates $\b(r)$ and verifies that it is $\le \a$,
for all $n$ and all $\q$.
It is these calculations that are not straightforward.
The main open problem (Section~\secref{Open}) is to find a proof
that can establish $\b \le \a$ without these calculations.

\section{Overlap Condition}
\seclab{OverlapCondition}
We say that two regions $A$ and $B$ in the plane \emph{overlap} 
if there is a point $p$ strictly interior to both $A$ and $B$.

Let $\mathcal{C}(r)$ be a circle of radius $r$ centered on $o$.
Define $| \mathcal{C}(r) \cap Z |= \b(r)$ as the angle subtended at $o$ by the shortest arc of $\mathcal{C}(r)$
that covers $\mathcal{C}(r) \cap Z$.
(We will drop the $r$ in $\b(r)$ when clear from the context.)
Note here we are allowing for the possibility that $\mathcal{C}(r) \cap Z$ might be
disconnected. And in fact, 
this is necessary, for in extreme situations ($\q \to 0$), this can occur.
See Fig.~\figref{Disconnected_n16_1deg}.
We will revisit disconnected arcs in Section~\secref{FlatRhomb}.
\begin{figure}[htbp]
\centering
\includegraphics[width=1.0\linewidth]{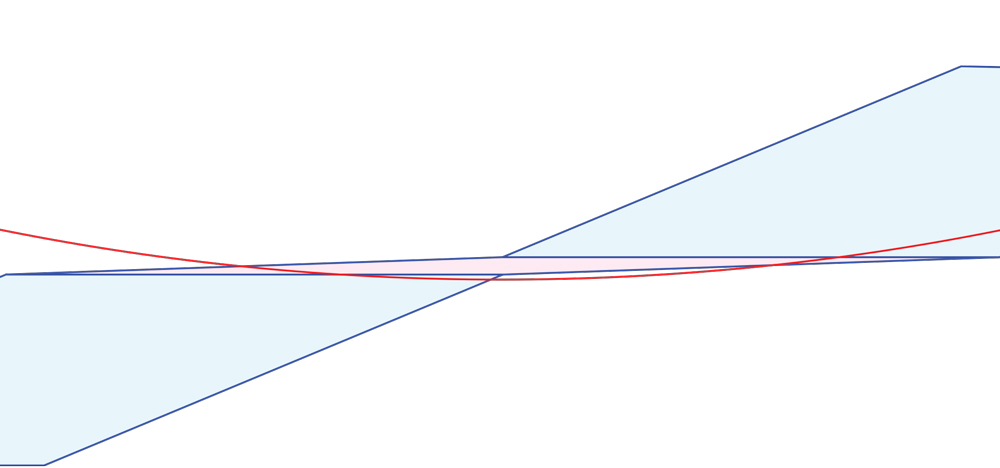}
\caption{ $| \mathcal{C}(r) \cap Z | $ is disconnected.
$n=16$, $\q=1^\circ$. $R_{n/2}=R_8$ is pink.}
\figlab{Disconnected_n16_1deg}
\end{figure}

\begin{lemma}
\lemlab{Overlap}
If $Z$ and $Z'$ overlap at point $p$ on circle $\mathcal{C}(r)$, $r=|op|$,
then $| \mathcal{C}(r) \cap Z | = \b(r) > \a$.
\end{lemma}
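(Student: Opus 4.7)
The proof plan is a short direct argument that exploits the rigid-rotation relationship between $Z$ and $Z'$. Suppose $p$ is a point of overlap, so $p$ is strictly interior to both $Z$ and $Z'$. Since $Z'$ is the rigid rotation of $Z$ about $o$ by angle $\alpha$ (established in Section~\secref{Notation}), the preimage $q$ of $p$ under this rotation lies strictly in the interior of $Z$. Rotation about $o$ preserves distance to $o$, so $p$ and $q$ both lie on $\mathcal{C}(r)$, where $r=|op|$, and their angular separation along $\mathcal{C}(r)$ is exactly $\alpha$.

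Next I would use that $p$ and $q$ are planar interior points of the closed region $Z$: each sits inside an open disk contained in $Z$, and the intersection of such a disk with $\mathcal{C}(r)$ is an open arc of positive angular width. Hence $\mathcal{C}(r)\cap Z$ contains a small open arc around $p$ and a small open arc around $q$.

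Finally, by definition $\beta(r)$ is the angular measure of the shortest single arc of $\mathcal{C}(r)$ that covers $\mathcal{C}(r)\cap Z$. Any such covering arc must contain both of the open arcs from the previous step. If the covering arc goes the short way between $p$ and $q$, it has length at least $\alpha$ plus the positive extensions past each endpoint, hence strictly greater than $\alpha$. If instead it goes the long way, it has length at least $2\pi-\alpha$, which is also strictly greater than $\alpha$ since $\alpha<\pi$ (the $n$ positive pole angles sum to less than $2\pi$). Either way, $\beta(r)>\alpha$.

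The main ``obstacle'' is conceptual bookkeeping rather than real difficulty: one must reconcile the argument with the fact that $\mathcal{C}(r)\cap Z$ may be disconnected (cf.\ Fig.~\figref{Disconnected_n16_1deg}). This is handled automatically, because any arc that covers the full, possibly disconnected, intersection is in particular an arc that covers the two local open neighborhoods near $p$ and $q$, and the latter already forces length $>\alpha$. I anticipate no serious difficulty in writing this out; the substantive work of the paper lies in the ensuing sections that establish the converse inequality $\beta(r)\le\alpha$ for all $r$, $n$, and $\theta$.
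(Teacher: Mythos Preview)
Your proposal is correct and follows essentially the same approach as the paper: both arguments exploit that $Z'$ is the rigid rotation of $Z$ about $o$ by $\alpha$ and compare angular extents along $\mathcal{C}(r)$. The paper phrases it as a one-paragraph proof by contradiction (if $\beta\le\alpha$ then the rotated arc leaves a nonnegative gap, so no interior point can be shared), whereas you give the direct version via the preimage point $q$ and open-arc neighborhoods; your write-up is a bit more careful about the ``long way'' case and the possibility of disconnected $\mathcal{C}(r)\cap Z$, but the underlying idea is the same.
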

\begin{proof}
Suppose for contradiction that $Z$ and $Z'$ overlap at $p$ but 
$| \mathcal{C}(r) \cap Z | = \b \le \a$.
Because $Z'$ is the rotation of $Z$ about $o$ by $\a$,
there is a gap of length $\a-\b \ge 0$ along $\mathcal{C}(r)$,
which separates $\mathcal{C}(r) \cap Z$ to the left (cw) and $\mathcal{C}(r) \cap Z'$ to the right (ccw).
Therefore, $Z$ and $Z'$ are separated or just touch along $\mathcal{C}(r$),
and $p$ cannot exist.
\end{proof}

\medskip
\noindent 
Note that, because $\mathcal{C}(r) \cap Z$ could be disconnected, it is possible
that $\b(r) > \a$ but there is no point $p$ witnessing overlap.
However, if there is a $p$, then it must be that $\b(r) > \a$.

In Fig.~\figref{NoOverlap}, $\mathcal{C}(r) \cap Z$ is the red arc spanning $\b(r) < \a$.
Therefore when this arc is rotated by $\a > \b(r)$ to the blue arc, the red and blue arcs cannot overlap.
\begin{figure}[htbp]
\centering
\includegraphics[width=0.7\linewidth]{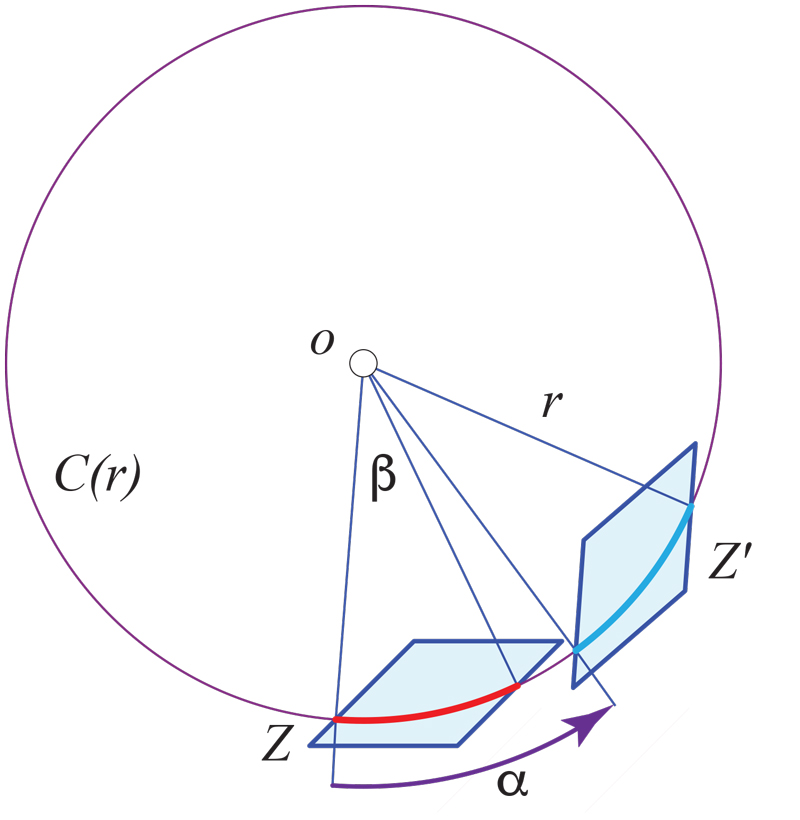}
\caption{$\b \le \a$ implies no overlap.}
\figlab{NoOverlap}
\end{figure}

\begin{cor}
\corlab{NoOverlap}
If $| \mathcal{C}(r) \cap Z | = \b(r) \le \a$ for all $r$, then $Z$ and $Z'$ do not overlap.
\end{cor}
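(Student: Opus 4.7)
The plan is to obtain the corollary as a one-line contrapositive of Lemma~\lemref{Overlap}, packaged over all radii. Recall from Section~\secref{OverlapCondition} that, by definition, $Z$ and $Z'$ overlap iff some point $p$ lies strictly interior to both regions. To rule out overlap in the aggregate, it therefore suffices to forbid such a witness one radius at a time, and Lemma~\lemref{Overlap} already does exactly that.

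Concretely I would argue by contradiction. Assume $Z$ and $Z'$ overlap, and let $p$ be a witness point strictly interior to both. Set $r = |op|$, so that $p \in \mathcal{C}(r)$. Lemma~\lemref{Overlap} then applies at this particular $r$ and delivers $\beta(r) > \a$, which directly contradicts the standing hypothesis that $\beta(r) \le \a$ holds for \emph{every} $r \ge 0$. Hence no witness $p$ can exist, and $Z$ and $Z'$ do not overlap.

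There is no real obstacle at this step: the corollary is a packaging move that converts a two-dimensional non-overlap statement into a one-parameter inequality $\beta(r) \le \a$ indexed by radius. The conceptual weight has already been isolated into Lemma~\lemref{Overlap}, which encodes the rotational geometry (that $Z'$ is $Z$ turned about $o$ through angle $\a$, so the two arcs $\mathcal{C}(r)\cap Z$ and $\mathcal{C}(r)\cap Z'$ must fit into $\mathcal{C}(r)$ with a gap of $\a-\beta$). The remaining difficulty, as the paper flags, is downstream: one must verify the hypothesis $\beta(r) \le \a$ for all $n$, all $\q \in (0,\pi/2)$, and all $r$, which is precisely the calculation the later sections undertake.
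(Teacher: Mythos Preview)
Your proposal is correct and matches the paper's treatment: the corollary is stated without proof, as an immediate contrapositive of Lemma~\lemref{Overlap} applied at the radius $r=|op|$ of any putative overlap witness. There is nothing to add.
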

\noindent
The remainder of the proof of Theorem~\thmref{net} concentrates on proving
$\b \le \a$.

\section{Regular Polygon: $\q=0$}
\seclab{S-RegPoly}
There is a sense in which 
the situation when $\q \to 0$ is the most ``dangerous," for then there can
be a flattened rhomb such as that illustrated in Fig.~\figref{Disconnected_n16_1deg}.
When $\q=0$, the zonohedron degenerates to a doubly-covered
regular $n$-gon, with the only curvature at the vertices on the rim.
(See the earlier Fig.~\figref{Octopus_n12_q05}.)
Then the unfolding of a zone $Z$ resembles mirrored $S$-shape composed to two half-regular polygons.
Fig.~\figref{S_RegPoly_n16_1deg}(a) shows an example with $\q=1^\circ$, 
and (b)~the $\q=0$ counterpart.
\begin{figure}[htbp]
\centering
\includegraphics[width=1.1\linewidth]{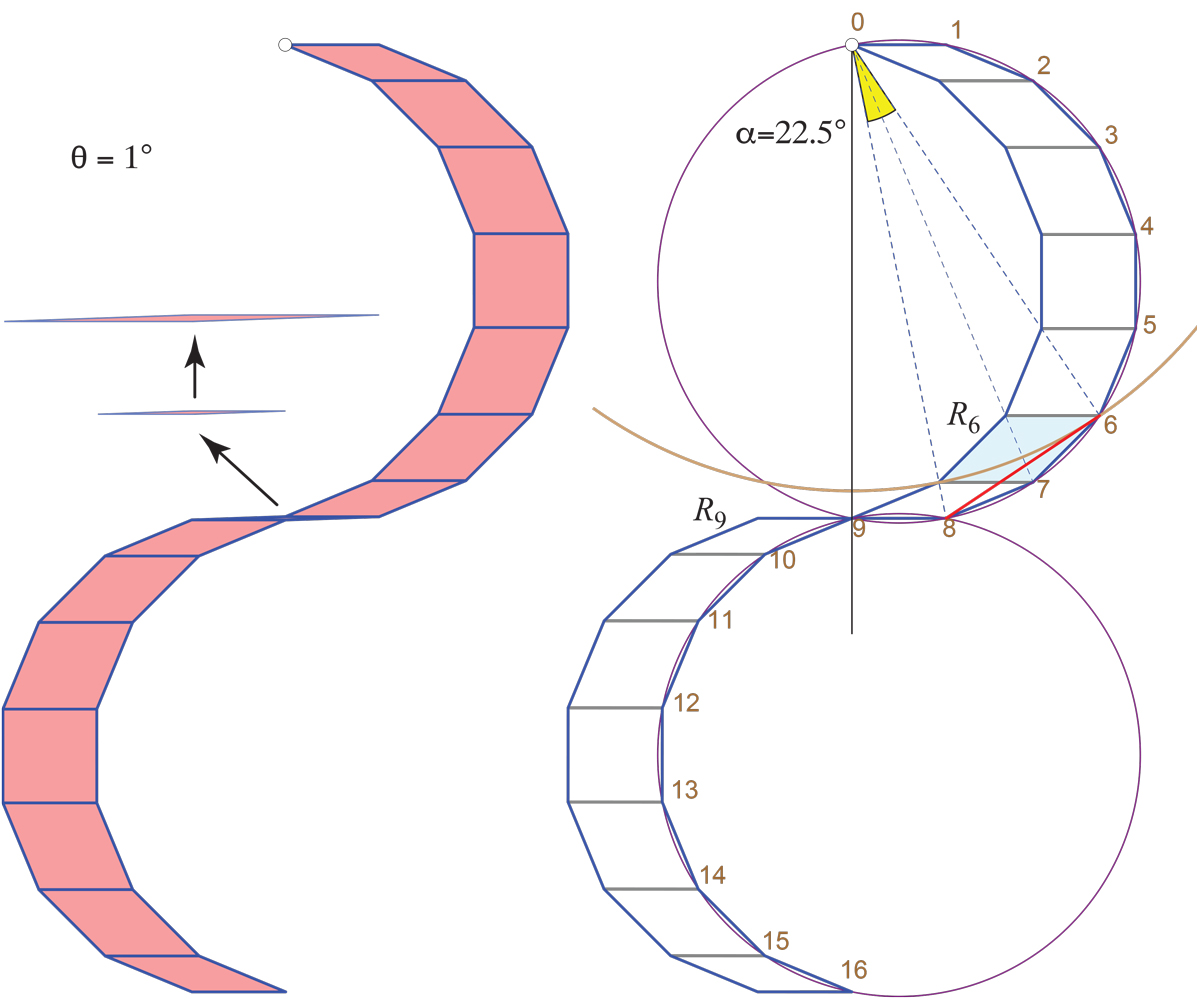}
\caption{$n=16$. (a)~$\q=1^\circ$. Two enlargements of rhomb $R_8$ are shown.
(b)~$\q=0, \a=22.5^\circ$.}
\figlab{S_RegPoly_n16_1deg}
\end{figure}
The $1^\circ$ example~(a) shows that one rhomb $R_8$ is nearly flat;
in~(b) that rhomb collapses to a line segment.

Whether one counts the doubly-covered
regular $n$-gon as a ``polyhedron," it is an interesting case, with
regularities not present when $\q > 0$.

There is a slight difference for $n$ even or odd;
compare Figs.~\figref{S_RegPoly_n16_1deg}
and~Fig.~\figref{S_RegPoly_n17_1deg}.
We will concentrate on $n$ even, leaving $n$ odd to remarks.
\begin{figure}[htbp]
\centering
\includegraphics[width=1.1\linewidth]{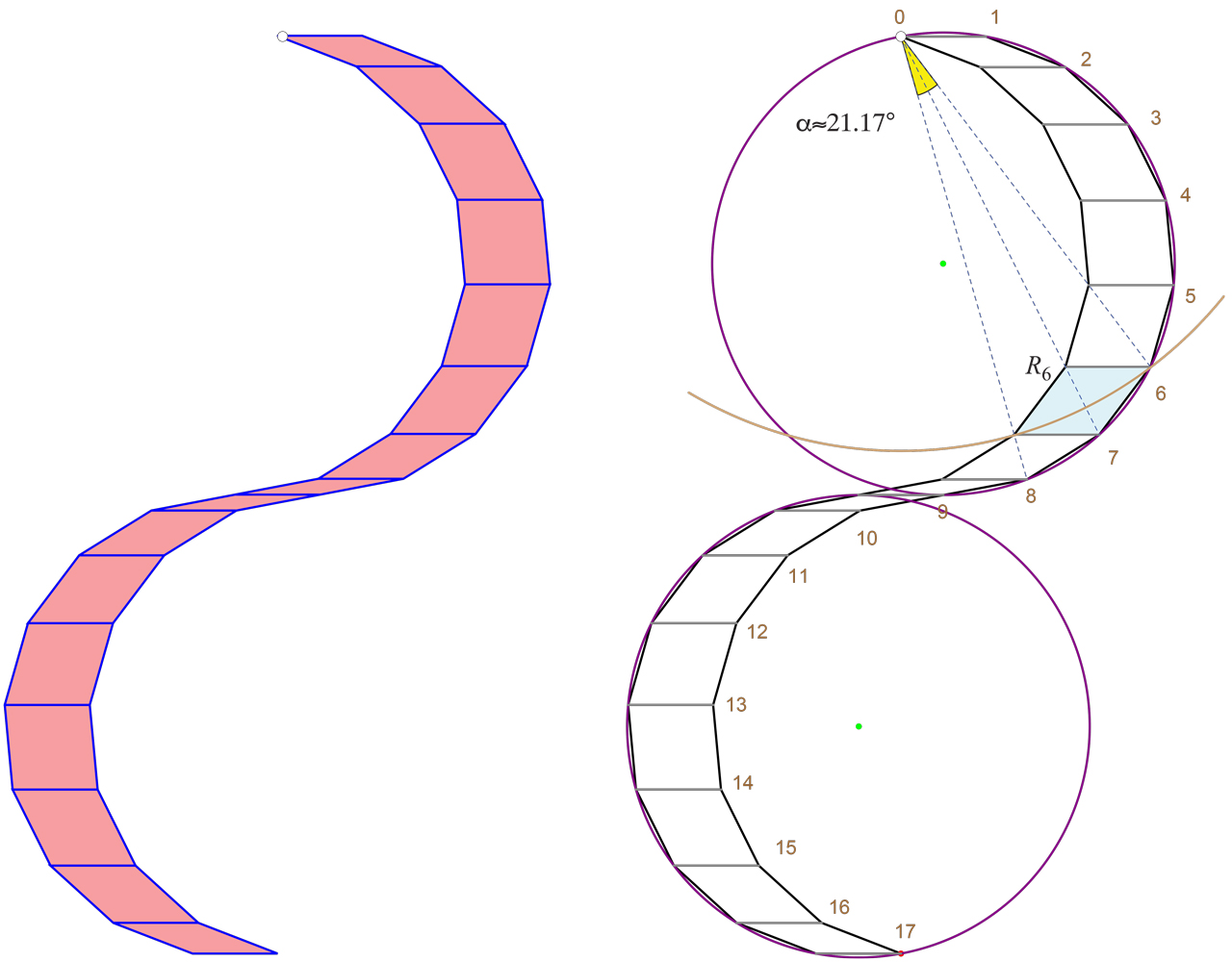}
\caption{$n=17$. (a)~$\q=1^\circ$.
(b)~$\q=0, \a \approx 21.17^\circ$.}
\figlab{S_RegPoly_n17_1deg}
\end{figure}

Let $\partial Z_L$ and $\partial Z_R$ be the right and left boundaries of $Z$,
with $Z$ in the previously described orientation.
Let $Z^+$ and $Z^-$ be the upper and lower halves of the $S$-shape.
Let $P_n$ be the regular polygon of $n$ vertices that passes through the
$n/2$ vertices of $\partial Z^+_R$.
Let $C_R$ be the circle circumscribing $P_n$.\footnote{
With unit rhombs, the radius of  $C_R$ is $1/(2 \sin \a/2)$.}
In Fig.~\figref{S_RegPoly_n16_1deg}(b),
$n=16$ and $\a=2 \pi/n = 22.5^\circ$.

\begin{figure}[htbp]
\centering
\includegraphics[width=1.0\linewidth]{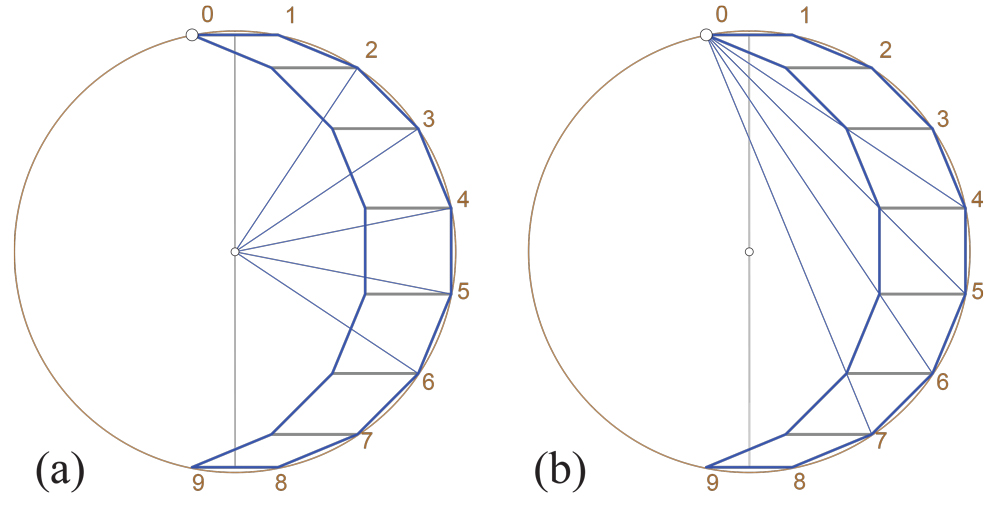}
\caption{$n=16$, $\q=0$, $\a= 22.5^\circ$. 
(a)~Each unit segment inscribed in $C_R$ spans $\a=2\pi/n$ from the center
of $C_R$.
(b)~Each segment spans $\a/2$ form a center on $C_R$.}
\figlab{Zupper_n16}
\end{figure}

We will prove that $| \mathcal{C}(r) \cap Z^+ |$
is exactly $\a$, and $| \mathcal{C}(r) \cap Z^- |$ is less than $\a/2$.
Thus the rotation from $Z$ to $Z'$ leaves $Z^+$ and $Z'^+$ touching,
with a large gap between $Z^-$ and $Z'^-$.
This is evident in Fig.~\figref{Octopus_n12_q05} when $\q=\frac{1}{2}^\circ$.

\begin{lemma}
\lemlab{RegPolyUpper}
$| \mathcal{C}(r) \cap Z^+ | = \a$ for all $r$ intersecting $Z^+$.
\end{lemma}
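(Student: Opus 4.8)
The plan is to deduce $|\mathcal{C}(r)\cap Z^+|=\a$ from a single rotation: I will show that rotation by $\a$ about $o$ carries the left boundary of $Z^+$ exactly onto the outward‑going (rightmost) part of its right boundary, so that for every $r$ the two extreme points of $\mathcal{C}(r)\cap Z^+$ are an $\a$‑rotation apart.

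First I would fix coordinates with $o$ at the origin and the top edge of $R_1$ along the positive real axis, identifying the plane with $\mathbb{C}$. Since $\q=0$ forces $R_j$ to have angle $j\a$ at its pole‑side vertex, both slanted edges of $R_j$ point in direction $-j\a$; as each rung is a horizontal unit segment, the left‑boundary vertices are the partial sums $L_k=\sum_{j=1}^{k}e^{-ij\a}$ and the right‑boundary vertices are $M_k=L_k+1$, for $k=0,1,\dots,n/2$. A one‑line geometric‑series computation gives $L_k=e^{-i(k+1)\a/2}\sin(k\a/2)/\sin(\a/2)$, so the $L_k$ lie on the circle $C_L$ of radius $\rho=1/(2\sin(\a/2))$ centered at $e^{-i\a}/(1-e^{-i\a})$, and the $M_k$ on the translate $C_R=C_L+1$. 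Both centers have modulus $\rho$, so $o\in C_L$ and $o\in C_R$: this is exactly the inscribed‑angle picture of Fig.~\figref{Zupper_n16}, in which each unit chord subtends $\a$ at the center and $\a/2$ at $o$.

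The crux is then a clean telescoping identity. Multiplying the partial sum by $e^{i\a}$ shifts the index,
\[
e^{i\a}L_k=\sum_{j=0}^{k-1}e^{-ij\a}=1+L_{k-1}=M_{k-1},
\]
so rotation by $\a$ about $o$ sends $L_k\mapsto M_{k-1}$ (with $M_0=1$ the right end of the top rung). Hence this rotation maps each left edge $L_{k-1}L_k$ onto the right edge $M_{k-2}M_{k-1}$, and the first left edge $oL_1$ onto the top rung $oM_0$; since it fixes distances to $o$, the paired edges span the same radial interval $[\,|L_{k-1}|,|L_k|\,]$. As $|L_k|=\sin(k\a/2)/\sin(\a/2)$ is strictly increasing for $k\le n/2$, up to the diameter $2\rho$ at $k=n/2$, each radius $r\in(0,2\rho]$ meets the left boundary exactly once; that crossing is the most clockwise point of $\mathcal{C}(r)\cap Z^+$, its $\a$‑image is the most counterclockwise point, and therefore the shortest covering arc has measure exactly $\a$.

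The step I expect to be the real obstacle is the flat middle rhomb $R_{n/2}$, whose angle is $\pi$. Its slant direction is $-1$, so it collapses ($M_{n/2}=L_{n/2-1}$) and folds the right boundary back at the antipode $M_{n/2-1}$ of $o$; at the same time it is precisely what pushes the \emph{left} boundary all the way out to $L_{n/2}$ at radius $2\rho$. For $r>\cot(\pi/n)$ the section $\mathcal{C}(r)\cap Z^+$ is therefore \emph{disconnected}, so one must use the ``shortest covering arc'' definition of $\b$ and check that its extreme points are still the rotation pair on $L_{n/2-1}L_{n/2}$ and its image $M_{n/2-2}M_{n/2-1}$, the intervening gap not enlarging the arc. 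Dispatching this degenerate rhomb cleanly—and the analogous bookkeeping when $n$ is odd, where the central rhomb is only nearly flat—is the one delicate point; for every non‑degenerate rhomb the equality $\b=\a$ is immediate from the displayed identity.
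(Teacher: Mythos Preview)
Your argument is correct and takes a genuinely different route from the paper's. The paper works locally: it invokes the inscribed-angle theorem to show each unit edge on $\partial Z^+_R$ subtends $\a/2$ from $o$, so a rhomb diagonal subtends $\a$; then, for an arc crossing two rhombs, it uses a symmetry-about-diagonals argument to decompose the arc as $2a+2b$ with $a+b=\a/2$. You instead prove a single global fact---the telescoping identity $e^{i\a}L_k=M_{k-1}$---which says that rotation by $\a$ about $o$ carries the entire left boundary of $Z^+$ onto the top rung plus the right boundary. Combined with the radial monotonicity $|L_k|=\sin(k\a/2)/\sin(\a/2)$, this gives $\b(r)=\a$ for every $r$ in one stroke, with no case split on how many rhombs the arc meets. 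Your packaging is cleaner and makes the equality transparent; the paper's packaging, on the other hand, stays closer to the picture and needs no coordinates.

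On the ``delicate point'' you flag: the radii $r\in(\cot(\pi/n),2\rho)$ for which $\mathcal{C}(r)$ meets the collapsed rhomb $R_{n/2}$ are exactly the radii the paper excludes from this lemma and treats separately (Lemma~\lemref{FlatRhomb}, Section~\secref{FlatRhomb}). So your concern is legitimate but already carved out of the statement's intended scope; within that scope your proof is complete. If you want your rotation argument to absorb that case too, note that the edge $L_{n/2-1}L_{n/2}$ is the $\a$-preimage of $M_{n/2-2}M_{n/2-1}$, so the extreme points of the covering arc are still an $\a$-pair even when the flat segment creates a gap in between---the covering-arc definition of $\b$ does the rest.
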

\begin{proof}
We refer to Figs.~\figref{S_RegPoly_n16_1deg}(b) and~\figref{Zupper_n16} throughout this proof.

Each side of a rhomb along $\partial Z^+_R$ is a unit-length chord of  $C_R$,
and so spans angle $\a$ from the center
of $C_R$; see Fig.~\figref{Zupper_n16}(a).
Therefore each edge spans $\a/2$ from $o=v_0$,
which lies on $C_R$; see Fig.~\figref{Zupper_n16}(b).
In Fig.~\figref{S_RegPoly_n16_1deg}(b), $v_6 v_7$ is such an edge, of rhomb $R_6$.
A full rhomb of $Z^+$ spans a second edge, e.g., $v_8 v_9$.
This determines a chord $c = v_6 v_8$ (red in Fig.~\figref{S_RegPoly_n16_1deg}(b)), which therefore spans $\a$ from $o$.
Thus $\mathcal{C}(r)$ with the appropriate $r$ intersects a rhomb passing through
diagonal endpoints, $R_6$ in our example.

Now we argue for the same conclusion when $\mathcal{C}(r)$ passes through $Z^+$
at an arbitrary location, rather than spanning a diagonal of some specific rhomb $R_i$.
Let $\mathcal{C}(r)$ enter $Z^+$ at $x$ and exit at $y$; see Fig.~\figref{SlidingUpper}.
We now argue that the length of the arc $\mathcal{C}(r) \cap Z^+$ is exactly $\a$, 
just as it is for the diagonal of each $R_i$.
First note that the arc can cross at most two rhombi, i.e., it can cross at most one horizontal
segment of $Z^+$.
Second, note that the arc can be partitioned into two parts, each part
is symmetric with respect to a diagonal.
In the figure, the diagonals are $v_8 o$ and $v_7 o$.
Then the left part of the arc (green) has length $2a$ and the right part (orange)
has length $2b$.
Now notice that $a+b$ is the length of the arc spanned by two adjacent diagonals.
But we know that is $\a/2$. So we have established that $| \mathcal{C}(r) \cap Z^+ | = \a$. 

Thus we proved that $| \mathcal{C}(r) \cap Z^+ | = \a$ for all $r$ crossing $Z^+$.
\end{proof}
\medskip
\begin{figure}[htbp]
\centering
\includegraphics[width=0.75\linewidth]{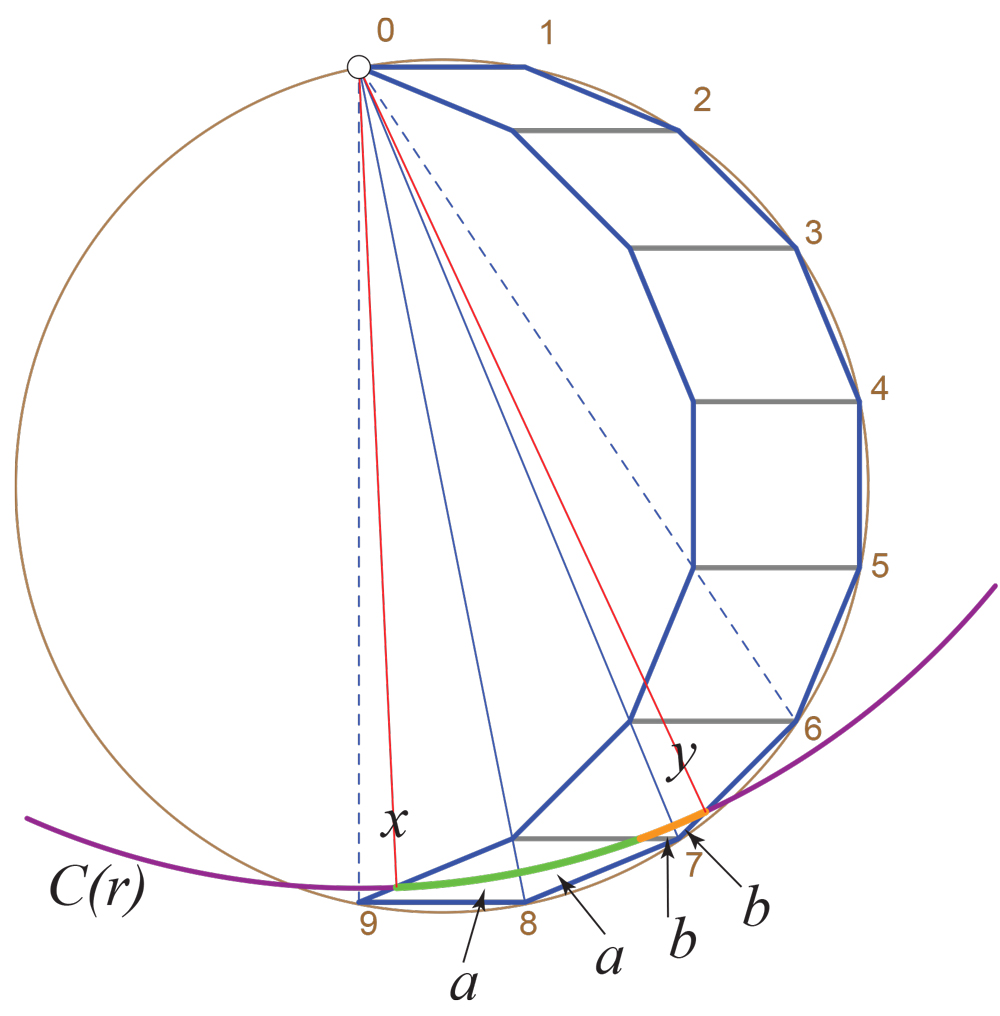}
\caption{$| \mathcal{C}(r) \cap Z^+ | = 2a+2b$.}
\figlab{SlidingUpper}
\end{figure}

Now we turn to the claim for $| \mathcal{C}(r) \cap Z^- |$.
Even though this arc is small, I have not found a simple proof that it is $\le \a$.
In fact, it is strictly less than half of $\a$.
Because my proof is technical, 
and only applies to the degenerate polar zonohedron when $\q=0$,
it is relegated to the Appendix.

\begin{lemma}
\lemlab{RegPolyLower}
$| \mathcal{C}(r) \cap Z^- | < \a/2$ for all $r$ crossing $Z^-$.
\end{lemma}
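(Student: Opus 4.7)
My plan is to exploit the $180^\circ$ rotational symmetry of the $S$-shape to pin down $\b(r)$ for $Z^-$, and then verify the strict bound by direct computation after a short case analysis.

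First I would identify the neck point $q$: since $R_{n/2}$ has degenerated to a segment, its midpoint $q$ is the unique shared vertex of $Z^+$ and $Z^-$, and the fixed point of the $180^\circ$ rotation $\rho$ that exchanges the two halves. For $n$ even, the vertices $v_0=o, v_1, \ldots, v_{n/2}=q$ of $\partial Z^+_R$ are exactly half the vertices of the regular $n$-gon inscribed in $C_R$, so $o$ and $q$ are diametrically opposite on $C_R$ and $|oq|=2R$ with $R=1/(2\sin(\a/2))$. Under $\rho$, $C_R$ maps to a congruent circle $C_R'$ of radius $R$ whose center $c'$ satisfies $|oc'|=3R$, and whose right-boundary arc carries the vertices of $\partial Z^-_R$. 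Applying the sliding argument of Lemma~\lemref{RegPolyUpper} in the rotated frame, circles centered at $o':=\rho(o)$ already cut $Z^-$ in arcs of angular length exactly $\a$; the task is to translate this to circles centered at $o$ instead.

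Second, because $o$ lies outside $C_R'$ (at distance $3R$ from $c'$) while $\partial Z^-_R$ occupies the far portion of $C_R'$ from $o$, each unit chord of $\partial Z^-_R$ subtends $\a$ at $c'$, hence $\a/2$ from any point on the far arc, and therefore \emph{strictly less} than $\a/2$ from $o$. This is the right order of magnitude for the desired bound, but it does not yet establish $\b(r)<\a/2$, because the arc $\mathcal{C}(r)\cap Z^-$ can combine contributions from more than one boundary edge, just as the diagonal chord $v_6v_8$ in the proof of Lemma~\lemref{RegPolyUpper} combined two edges into a total span of $\a$.

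The hard part is upgrading this per-edge observation into the full claim. My plan is to place $o$ at the origin and describe the vertices of $Z^-$ as $\{\rho(v_i)\}$, parameterizing for each admissible $r$ the angular extremes $p_L(r), p_R(r)\in \partial Z^-\cap \mathcal{C}(r)$ in closed form, case by case in the pair of boundary edges containing them. A sliding argument analogous to the one at the end of the proof of Lemma~\lemref{RegPolyUpper} should show $\b(r)=\angle p_L\,o\,p_R$ is monotone or unimodal on each case, so the verification reduces to a finite list of extremal configurations (tangencies of $\mathcal{C}(r)$ to boundary edges, or mergers of two sub-arcs of $\mathcal{C}(r)\cap Z^-$). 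Each such extremal case should collapse, after using $R=1/(2\sin(\a/2))$ and $|oq|=2R$, to a single-variable trigonometric inequality in $\a$ that is verifiable for all even $n\ge 4$ by elementary monotonicity. The case $n$ odd requires only a small adjustment: $v_0$ and $v_{\lceil n/2\rceil}$ are no longer antipodal on $C_R$, so $|oc'|$ and the relative placement of $C_R'$ must be recomputed, but the same outside-circle inscribed-angle argument and the same case analysis go through with $o$ still comfortably outside $C_R'$.
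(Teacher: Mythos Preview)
Your geometric setup is correct and well-chosen: the $180^\circ$ rotation $\rho$ about the neck point $q$, the identification $|oc'|=3R$ with $R=1/(2\sin(\a/2))$, and the inscribed-angle observation that each unit chord of $C_R'$ subtends strictly less than $\a/2$ from $o$ (since $o$ lies outside $C_R'$) are all sound. But from there your plan diverges from the paper and, as written, does not close.

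The gap is exactly where you flag the ``hard part.'' The sliding/symmetry argument that made Lemma~\lemref{RegPolyUpper} work depended on $o$ lying \emph{on} $C_R$, so that the rhomb diagonals pass through $o$ and the arc decomposes into two pieces each symmetric about a diagonal line through $o$. In $Z^-$ the diagonals pass through $o'=\rho(o)$, not through $o$, so there is no analogous decomposition for circles centered at $o$. Your proposed remedy---parameterize $p_L(r),p_R(r)$ edge-by-edge, argue monotonicity/unimodality, and reduce to finitely many extremal tangency configurations---is plausible in outline, but none of the ``should'' steps is justified, and there is no evident reason the extremal trigonometric inequalities will collapse to something checkable uniformly in $n$. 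As stated this is a program, not a proof.

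The paper takes a different route that sidesteps the case analysis entirely. It sandwiches $Z^-$ between two congruent circles $C^-_L,C^-_R$ (circumscribed and inscribed) meeting at the neck, so that $Z^-$ lies in the crescent $S$ between them, and then proves a ``Crescent Lemma'': the angle $|\mathcal{C}(r)\cap S|$ subtended at $o$ is a \emph{constant} $\b$ independent of $r$. (The argument is a short rigid rotation of the ray pair $BD$ onto $AC$.) This collapses the whole question to evaluating a single quantity $\b$ and comparing it to $\a$; the paper then computes $\b/\a$ explicitly as a function of $n$ and verifies $\b/\a<\tfrac12$ (with limit $\tfrac13$ as $n\to\infty$). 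Your per-edge bound hints at the right magnitude but misses this constancy, which is the key simplification you are lacking.
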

\begin{proof}
See the Appendix, Section~\secref{Appendix}.
\end{proof}.
\medskip

There remains the situation where $\mathcal{C}(r)$
intersects both $Z^+$ and $Z^-$.
We analyze this case separately in 
Lemma~\lemref{FlatRhomb} in
Section~\secref{FlatRhomb}, which establishes that arc 
$\mathcal{C}(r) \cap Z$ subtends exactly $\a$ from $o$.

\section{Positive $\q$}
\seclab{PositiveTheta}
In comparison to $\q=0$, $Z$ for $\q>0$ is ``stretched out"
in that the turn angle of $\partial Z_L$ between adjacent rhombs
is less than $\a$.
(This turn angle is not constant rhomb-to-rhomb but it is strictly less than $\a$ when $\q>0$.)
This means that $\mathcal{C}(r)$ crosses $Z$ closer to the horizontal,
intersecting rhombs nearer to unit length. Thus the intuition is
that these arcs are short, and become shorter as $r$ increases and pushes the chords
further from $o$.

\begin{figure}[htbp]
\centering
\includegraphics[width=1.0\linewidth]{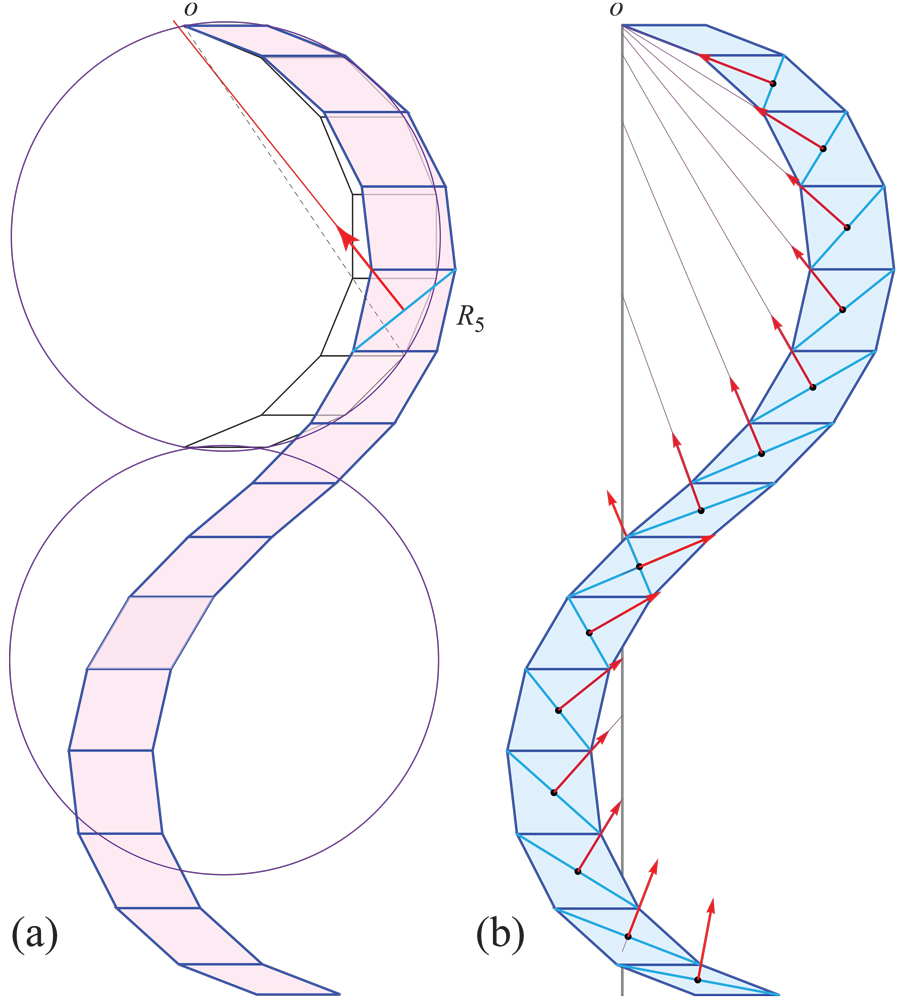}
\caption{$n=16$, $\q=20^\circ$.
(a)~Contrast with $\q=0$.
Diagonal midpoint perpendicular of $R_5$ falls below $o$.
(b)~All diagonal perpendiculars fall below $o$.}
\figlab{Diags_n16_q20}
\end{figure}

\begin{lemma}
\lemlab{PositiveTheta}
$| \mathcal{C}(r) \cap Z | \le \a $ for all $r$ crossing $Z$.
\end{lemma}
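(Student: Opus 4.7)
The plan is to extend the sliding/inscribed-angle reasoning of Lemma~\lemref{RegPolyUpper} to $\q > 0$, while exploiting the structural hint encoded in Fig.~\figref{Diags_n16_q20}: for $\q=0$ the perpendicular bisector of each ``critical'' rhomb diagonal passes through $o$ (so that $o$ sits on the common circumscribing circle $C_R$ and the inscribed angle theorem gives exactly $\a$), whereas for $\q>0$ these perpendicular bisectors all slide strictly below $o$. Qualitatively this moves $o$ off the family of inscribed-angle circles that realize the $\q=0$ bound, and the inscribed-angle theorem should then force the angle subtended at $o$ by any chord of $\mathcal{C}(r)\cap Z$ to drop strictly below $\a$.

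To make this rigorous, I would first reduce to a finite collection of critical radii. As $r$ grows, $\b(r)$ varies continuously and is piecewise smooth between the discrete events at which $\mathcal{C}(r)$ sweeps past a vertex of $\partial Z_L \cup \partial Z_R$. A symmetry decomposition analogous to the ``$2a+2b$'' trick used in the proof of Lemma~\lemref{RegPolyUpper} should show that on each smooth piece $\b$ is either monotone or maximized at such a boundary event, so it suffices to bound $\b(r)$ at the finitely many critical $r$ for which both endpoints of $\mathcal{C}(r)\cap Z$ lie at vertices of $\partial Z$.

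For each such critical configuration, the chord $xy$ is a diagonal of a single rhomb $R_i$, or else joins vertices of two consecutive rhombs. Writing the polar zonohedron as the Minkowski sum of $n$ rotated generators parameterized by $\q$, I would produce explicit coordinates for the vertices of $\partial Z_L$ and $\partial Z_R$, express $|ox|$, $|oy|$, and $|xy|$ as elementary functions of $n$, $\q$, and the rhomb index $i$, and then verify $\angle xoy \le \a$ via the law of cosines at $o$. The perpendicular-bisector observation is precisely what makes the inequality plausible: relative to the $\q=0$ baseline, the bisector's descent below $o$ lengthens $|ox|$ while shortening $|xy|$ in just the right proportion, shrinking $2\arcsin(|xy|/(2|ox|))$ below $\a$.

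The main obstacle is that, despite this clean geometric motivation, the resulting trigonometric inequality does not appear to collapse to a single closed-form comparison. Verifying $\angle xoy \le \a$ uniformly in $i$, $n$, and $\q$ seems to require splitting into the upper-zone case ($i \le n/2$) and the lower-zone case ($i > n/2$), then gluing the two halves using Lemma~\lemref{FlatRhomb} in the transitional regime where $\mathcal{C}(r)$ meets both halves of $Z$. This is exactly the calculation-heavy approach that Section~\secref{ProofPlan} flags as unsatisfying, and I do not see how to bypass it without a genuinely new geometric identity relating the polar axis, the rhomb diagonals, and the rotation angle $\a$.
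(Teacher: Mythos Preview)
Your proposal is not a complete proof, and you are candid about that; but the specific gap is worth naming because the paper closes it by a different route than the one you sketch.

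The reduction to ``critical radii where both endpoints of $\mathcal{C}(r)\cap Z$ are vertices of $\partial Z$'' is the weak link. The $2a+2b$ decomposition in Lemma~\lemref{RegPolyUpper} worked only because, at $\q=0$, the pole $o$ sits on the circumscribing circle $C_R$, so that the rays $o v_i$ are symmetry axes of the arc pieces and the inscribed-angle theorem pins each piece. For $\q>0$ there is no such circle through the $\partial Z_R$ vertices containing $o$, so you lose exactly the symmetry that made the interpolation argument go through; there is no a~priori reason the maximum of $\b(r)$ on a smooth piece should occur at a vertex event. Without that reduction, the law-of-cosines program you describe has to handle a one-parameter family of chords for every pair of adjacent rhombs, and (as you note) that does not collapse.

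The paper avoids computing $\b(r)$ altogether. It uses the perpendicular-bisector observation you already made in step~(1), but draws a different conclusion from it: since the bisector of each rhomb diagonal passes below $o$, any chord of $\mathcal{C}(r)$ through $Z$ is \emph{shallower} than a rhomb diagonal, hence stays inside a single rhomb (or, after a short sliding argument, can be translated into a single rhomb). Once the arc is trapped in one rhomb $R_i$, $\b(r)$ is bounded by the angle that the \emph{entire} rhomb $R_i$ subtends at $o$; and since each unit side of $R_i$ subtends at most $\a/2$ (with equality only at $\q=0$), the whole rhomb subtends at most $\a$. This coarse ``whole-rhomb'' upper bound is the missing idea: it replaces your proposed vertex-to-vertex trigonometric verification by a single uniform estimate, at the cost of needing the separate Lemma~\lemref{FlatRhomb} for the three-rhomb exceptional case near $R_{n/2}$.
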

\begin{proof}
The proof has four parts:
\begin{enumerate}[label={(\arabic*)}]
\item The chord $c$ determined by $\mathcal{C}(r) \cap Z$ is more shallow (more horizontal) than
rhomb diagonals.
\item If the chord $c$ includes a corner of rhomb $R_i$, then it remains in $R_i$: $c \subset R_i$.
\item If $c$ does not include a corner of a rhomb, then it fits inside the rhomb immediately above or below.
\item The most broad presentation of a rhomb to $o$ is the nearly flat rhomb, with diagonal
near $2$, which subtends strictly less than $2 \cdot \a/2 = \a$ from $o$.
\end{enumerate}
\paragraph{(1).}
Let $c = xy= \mathcal{C}(r) \cap Z$ be the chord of $\mathcal{C}$.
It must be that the perpendicular bisector of $xy$ 
passes through $o$, the center of $\mathcal{C}(r)$.
Recall that the diagonals in the regular-polygon case---$\q=0$,  Fig.~\figref{Zupper_n16}---extend
directly through $o$.
The two diagonals of a rhomb meet orthogonally at their midpoints,
so if $xy$ is one diagonal, the other aims toward $o$.

Fig.~\figref{Diags_n16_q20}(a) contrasts the $\q>0$ situation with $\q=0$.
There, for example, the diagonal of $R_5$, because of the turn-angle reduction/straightening, 
cannot any longer aim toward $o$ (red), as it did when $\q=0$ (dashed).
Extending this to all rhomb diagonals leads to
Fig.~\figref{Diags_n16_q20}(b), which displays all these potential diagonal chords, and
shows that $c$ cannot be a diagonal of $R_i$ for $i \ge 2$:
for each possible rhomb diagonal, the chord perpendicular falls below $o$.\footnote{
Just barely below for $R_2$, by $10^{-6}$.}
\paragraph{(2).}
This shows that the diagonals are too vertically steep 
(in our orientation of $Z$) to serve as the chord $c$:
$c$ must be more nearly horizontal.
This implies that $c$ would stay inside $R_i$ were it to pass through a corner of $R_i$:
$c \subset R_i$.
%
\begin{figure}[htbp]
\centering
\includegraphics[width=0.75\linewidth]{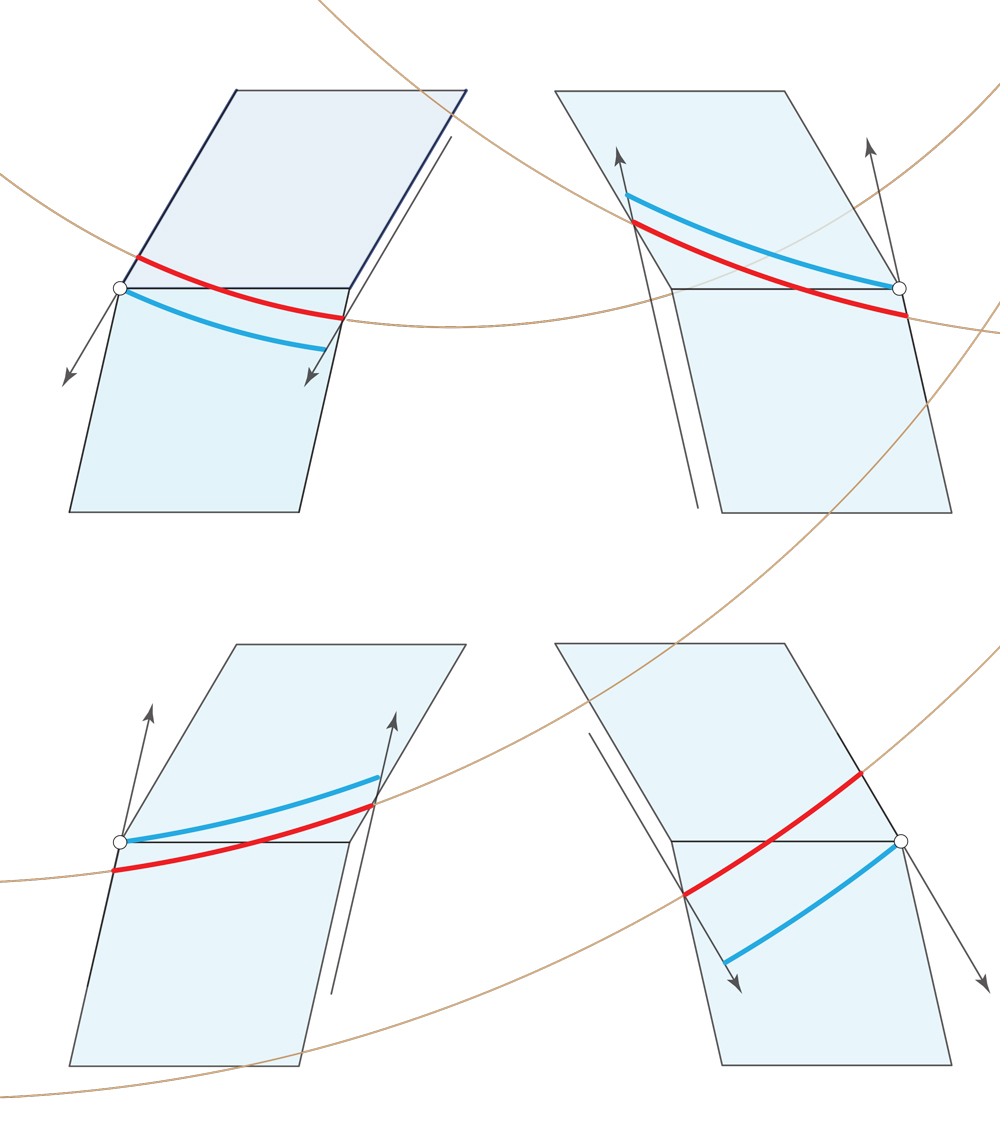}
\caption{The red arc $\mathcal{C}(r) \cap Z$ can be translated up or down to fit
into an adjacent rhomb. The convex corner at an endpoint of the blue translated
arc is marked in each case.}
\figlab{SlideArcs}
\end{figure}

\paragraph{(3).}
First we note that if an arc of $\mathcal{C}(r)$ intersects three or more rhombs,
then with one exception,
its chord $c$ is steeper than the diagonal of a middle rhomb.
The exception concerns the nearly flat rhomb $R_{n/2}$ illustrated
earlier in Fig.~\figref{Disconnected_n16_1deg}.
Because the argument for this exception is not straightforward, we address that in a separate lemma,
Lemma~\lemref{FlatRhomb} in Section~\secref{FlatRhomb} below.
So henceforth assume $c=xy$ crosses a horizontal and intersects two rhombs, $R_i$ and $R_{i+1}$.

Fig.~\figref{SlideArcs} shows that the arc fits in either the adjacent upper or lower rhomb,
by sliding an arc endpoint toward the convex vertex of $\partial Z$.
One endpoint of the horizontal $R_i \cap R_{i+1}$ is a convex vertex and the other endpoint a reflex vertex.
Sliding/translating the arc endpoint $x$ along the side of the rhomb toward the convex vertex
results in the other end of the arc $y$ to lie strictly internal to either $R_i$ or $R_{i+1}$.
This follows because the reflex vertex implies that the extension of the incident $R_i$ rhomb side enters
$R_{i+1}$, and the extension of the incident $R_{i+1}$ rhomb side enters $R_i$.
The only exception is when there is zero turn angle between the rhombs
(which occurs for example with $R_8$ and $R_9$ in Fig.~\figref{S_RegPoly_n17_1deg}),
when the translated arc has both endpoints on rhomb sides.
So in all cases, the arc fits inside a rhomb. In the next step it does not matter in which
rhomb it fits, for we use an upperbound over all rhombs.
\paragraph{(4).}
Now that we know $\mathcal{C}(r) \cap Z$ fits inside some rhomb $R_i$ of $Z$,
we look at the angle subtended at $o$ by the full rhomb $R_i$.
This is an upperbound on the measure of the arc inside
(often a considerably generous upper bound).

Now, the broadest extent of a rhomb from the point of view of $o$ is
the nearly flat rhomb whose long diagonal is nearly $2$ units.
For example, this is the middle rhomb $R_8$ illustrated previously in
Fig.~\figref{S_RegPoly_n16_1deg}(a).
As we know from Fig.~\figref{Zupper_n16}, each
unit side of a rhomb subtends $\a/2$ in $Z^+$ when $\q=0$.
For $\q>0$, each unit side subtends $< \a/2$, not only in $Z^+$ but for all
unit sides in a rhomb of $Z$.
Therefore a whole rhomb subtends $< \a$.

Fig.~\figref{AngsSub} illustrates the actual $\b_i$ subtended angles for $R_i$,
$i=1,\ldots,n-1$, in two
examples. As claimed, $\b < \a$, with equality only for $i=1$.
(The small upticks visible at $R_{12}$ when $n=16$, 
and at $R_{18}$ when $n=24$, reflect the lower tail of the $S$-shape, where
the rhombs become more flattened.)

Equality to $\a$ only occurs when $\mathcal{C}(r)$ crosses the first rhomb $R_1$.
For all larger $r$, the inequality is strict.
\end{proof}
\begin{figure}[htbp]
\centering
\includegraphics[width=0.75\linewidth]{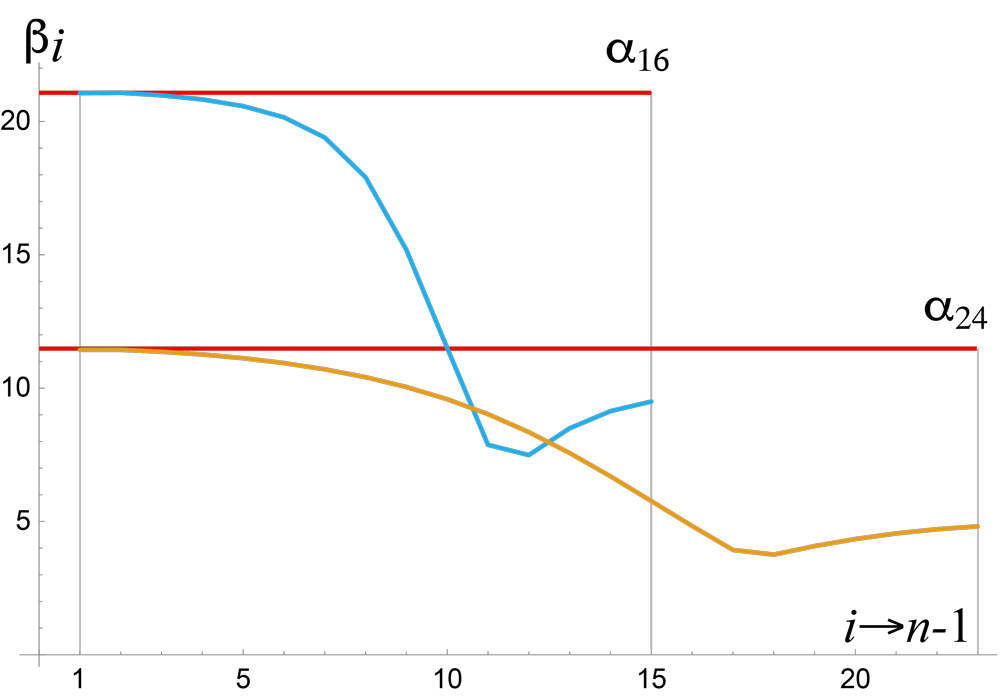}
\caption{$\b_i$ is the angle (in degrees) subtended from $o$ by rhomb $R_i$:
for $n=16$, $\q=20^\circ$ when $\a=22.5^\circ$; and for $n=24$, $\q=40^\circ$,
when $\a \approx 11.48^\circ$.}
\figlab{AngsSub}
\end{figure}

\subsection{Nearly Flat Rhomb $R_{n/2}$} 
\seclab{FlatRhomb}
We encountered a disconnected arc $\mathcal{C}(r) \cap Z$
earlier in Fig.~\figref{Disconnected_n16_1deg}.
This is also an example where the arc crosses three rhombs.
This nearly flat transitional rhomb only occurs when $n$ is even.
Even as $\q \to 0$, no rhomb for $n$ odd becomes arbitrarily close to flat;
see Fig.~\figref{S_RegPoly_n17_1deg}.

In order to complete step~(4) of Lemma~\lemref{PositiveTheta}---to show 
the angle subtended at $o$ by the full rhomb $R_i$ is $<2$---we
analyze this exceptional case in the following lemma.

\begin{figure}[htbp]
\centering
\includegraphics[width=1.1\linewidth]{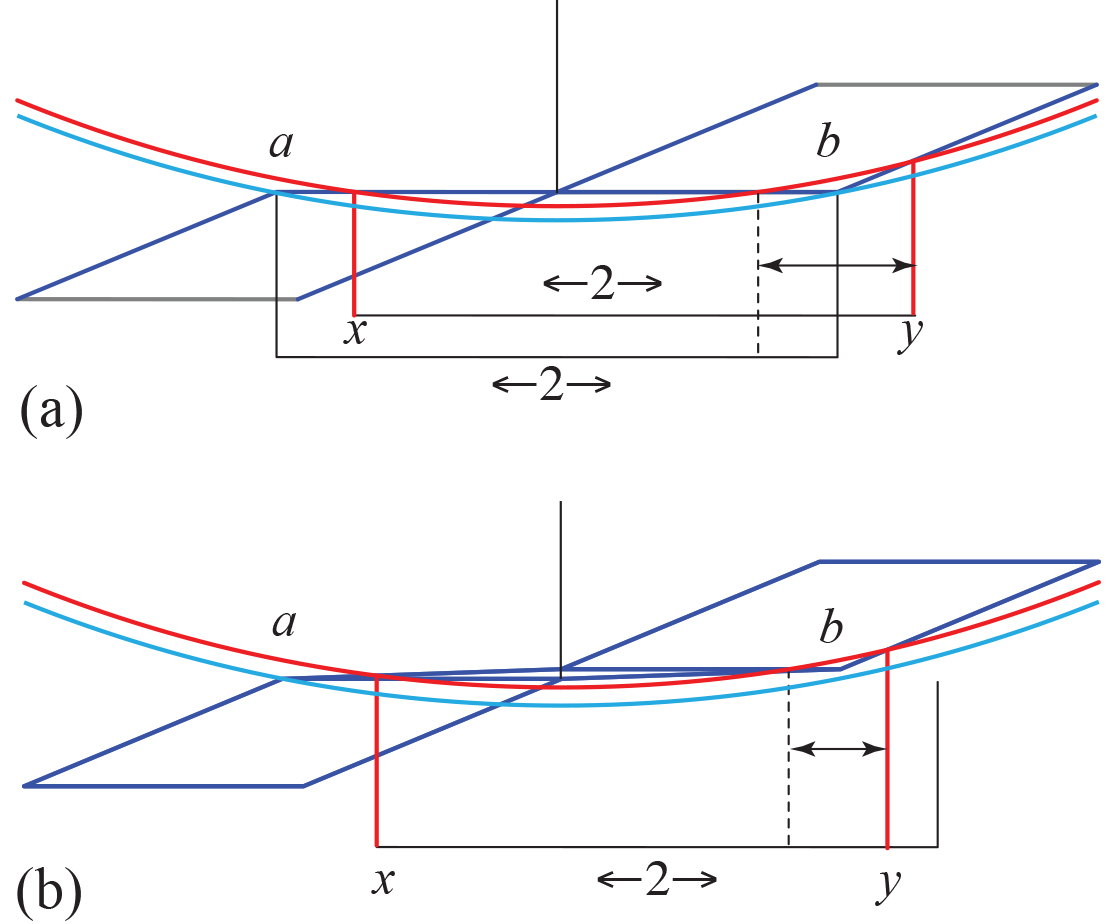}
\caption{$n=16$. (a)~$\q=0$: $|xy|=|ab|$. (b)~$\q=1^\circ$: $|xy|<2$.}
\figlab{Disconnected_n16_lemma}
\end{figure}
\begin{lemma}
\lemlab{FlatRhomb}
If $\mathcal{C}(r)$ intersects $R_{n/2}$, $n$ even, then the chord $c$ of arc $\mathcal{C}(r) \cap Z$ has length $<2$.
\end{lemma}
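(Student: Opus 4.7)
The plan is to bound the chord length $|c|$ via the long diagonal of $R_{n/2}$, which has length $2\cos(\phi/2)<2$ whenever the acute angle $\phi$ of $R_{n/2}$ is positive (as it is for all $\q>0$, when $R_{n/2}$ is a nondegenerate rhomb).

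First I would place $R_{n/2}$ with its long diagonal along the horizontal axis, labeling its two acute-angle endpoints $a$ and $b$, so that $|ab|=2\cos(\phi/2)<2$. The remaining two vertices of $R_{n/2}$, its obtuse corners, lie on the perpendicular short diagonal: one is the reflex vertex of $\partial Z$ where the zone's $S$-shape bends, and the other a convex vertex on the opposite boundary. Since $\mathcal{C}(r)\cap Z$ may be disconnected when the circle crosses the reflex vertex (as in Fig.~\figref{Disconnected_n16_1deg}), the chord $c=xy$ connects the two extreme endpoints of the whole arc as seen from $o$, possibly straddling the reflex vertex.

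Next I would show that $x$ and $y$ are confined to $R_{n/2-1}\cup R_{n/2}\cup R_{n/2+1}$ (exploiting that $\mathcal{C}(r)$ has radius $r\sim\mathrm{dist}(o,R_{n/2})\sim n/2$, so its local behavior is nearly straight) and then bound the horizontal extent of this three-rhomb strip along the direction of $ab$. The adjacent rhombs $R_{n/2\pm 1}$ attach to $R_{n/2}$ along its obtuse-vertex sides at specific angles determined by $\phi$, $\q$, and $n$, so their corners farthest from $ab$ can be computed explicitly. Combining the bound $|ab|<2$ with an upper bound on the excursion of $R_{n/2\pm 1}$ past $a$ and $b$ should yield $|xy|<2$.

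The main obstacle is that both the deficit $2-|ab|$ and the excursion of the adjacent rhombs shrink as $\phi\to 0$, so honest constants are required rather than asymptotic estimates; making the comparison tight is the delicate step. A cleaner alternative is a continuity argument anchored at $\q=0$: Fig.~\figref{Disconnected_n16_lemma}(a) indicates $|xy|=|ab|$ in that limit, and one would then argue that $|xy|$ depends continuously (and ideally monotonically) on $\q$ without ever reaching $2$. Either approach ultimately rests on the strict positivity $\phi>0$, guaranteed by $\q>0$, to produce the strict inequality $|c|<2$.
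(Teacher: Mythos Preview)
Your proposal correctly identifies the relevant configuration (the three rhombs $R_{n/2-1},R_{n/2},R_{n/2+1}$, the long diagonal $|ab|=2\cos(\phi/2)<2$, and the fact that the chord endpoints may leak into the neighbors), and you honestly flag the real difficulty: as $\q\to 0$ the deficit $2-|ab|$ and the excursion of the neighbors past $a,b$ both vanish at the same order, so neither a naive ``diagonal plus excursion'' bound nor bare continuity from $\q=0$ settles which one wins. That is exactly where the argument has to do work, and your sketch stops short of supplying the mechanism.

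The paper's proof fills this gap with a concrete comparison. At $\q=0$ one has $|xy|=2$ exactly, by a symmetry of the arc about the diagonal line $bo$. For $\q>0$ the key observation is that the acute angle of $R_{n/2}$ equals $2\q$, so inserting $R_{n/2}$ lifts the corner $b$ of $R_-$ by at least $2\sin\q$, while the entry point $x$ of $\mathcal{C}(r)$ into $R_{n/2}$ is lifted by only $d\sin\q$ with $d\le 1$. This asymmetry in lifting forces the arc to exit $R_-$ earlier than in the $\q=0$ case, giving $|xy|<2$ strictly. In your language, this is precisely the quantitative statement that the excursion shrinks faster than the deficit---but it comes from comparing two specific vertical displacements, not from bounding the three-rhomb strip globally. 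Your continuity alternative is in the right spirit, but you would still need this lifting inequality (or something equivalent) to rule out $|xy|$ touching $2$ for some $\q>0$.
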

\begin{proof}
To simplify notation, let $R_0=R_{n/2}$, and let $R_-$ and $R_+$ be the congruent rhombs above and below $R_0$.
Refer to Fig.~\figref{Disconnected_n16_lemma} throughout.

We first analyze the situation when $\q=0$ when $R_0$ reduces to a segment.
If $\mathcal{C}(r)$ passes through the upper corner $a$ of $R_+$, then it also
passes through the lower corner $b$ of $R_-$.
See the blue arc in (a) of the figure.
If $r'<r$ is a bit smaller, then still the 
span of $\mathcal{C}(r') \cap Z = |xy|$ is exactly $2$.
This follows because the portion of the arc in $R_-$ is symmetric about the diagonal line $bo$,
so $|ax|=|by|$.
Note that $\mathcal{C}(r') \cap Z$ (red) is disconnected, and intersects all three rhombs.

Now consider the situation when $\q$ is small, (b) of the figure.
(This is an annotated version of the earlier Fig.~\figref{Disconnected_n16_1deg}.)
$R_0$ is a nearly flat rhomb between $R_-$ and $R_+$.
If $\mathcal{C}(r)$ passes through the upper corner $a$ of $R_+$, it falls below the lower corner $b$ of $R_-$,
because the small intervening height of $R_0$ lifts up $b$.
See the blue arc in (b) of the figure.
If $r'<r$ is a bit smaller, then the right end of the $\mathcal{C}(r')$ arc (red) crosses $R_-$,
but because $b$ is lifted, the span $\mathcal{C}(r') \cap R_-$ is smaller than in the $\q=0$ case.
The result is that the 
span $|xy|$ is strictly less than $2$.

We can quantify the lifting as follows.
Symbolic calculation show that the angle of the central rhomb $R_0$ is exactly $2 \q$;
so $2^\circ$ in Fig.~\figref{Disconnected_n16_1deg}(b).
Let $d = |ax| \le 1$ be the horizontal distance from $a$ to $x$. Then
the entry of $\mathcal{C}(r')$ at $x$ into $R_0$ is lifted by $d \sin \q$, whereas the lifting of $b$ is at least twice
that: $2 \sin \q$.

Although we have ignored the slight deviation from the horizontal of
the bottom and top edges of $R_-$ and $R_+$
when $\q$ is positive,
that deviation can be removed by a slight rotation, leaving the key inequality $|xy|<2$ intact.
\end{proof}

\medskip

\noindent
The slight deviation from the horizontal noted above implies that the nearly vertical diagonal
of $R_0$ does not aim through $o$, and so the diagonal argument used in
Lemma~\lemref{PositiveTheta} still holds for $R_{n/2}$.
This is discernible in the enlargement of $R_{n/2}$ 
shown in Fig.~\figref{S_RegPoly_n16_1deg}(a).

\medskip

With Lemma~\lemref{FlatRhomb}
completing
Lemma~\lemref{PositiveTheta},
we have now proved Theorem~\thmref{net}
via Corollary~\corref{NoOverlap}
for all polar zonohedra for $\q>0$.
And Lemmas~\lemref{RegPolyUpper} and~\lemref{RegPolyLower} settle it for the degenerate polyhedron when $\q=0$.

\section{Open Problem}
\seclab{Open}
That the zone-by-zone unfolding avoids overlap seems almost obvious,
yet the proof presented of Theorem~\thmref{net} feels labored.
Surely there is a simpler proof.

\paragraph{Acknowledgements.}
My unfolding software employs code from Hart's supplement to~\cite{HartPZono}.
I benefitted from discussions with Boris Aronov, Richard Mabry, and Joseph Malkevitch.

\clearpage
\section{Appendix: Lower Regular $n$-gon}
\seclab{Appendix}
In order to prove
Lemma~\lemref{RegPolyLower},
we first abstract the $S$-shaped regular polygon to
circles, surrounding $Z^+$ and $Z^-$, passing through $\partial Z_L$ and $\partial Z_R$.
We call these circles $C^\pm_L$ and $C^\pm_R$.
We compute arc lengths through these circles.
These arc lengths are clearly an upperbound on the arcs intersecting
$Z$ sandwiched between the circles.

Secondly, for computational convenience, we reflect $Z^-$ across the vertical so that all arcs
are to the right of the vertical.
The key result is a ``crescent lemma" that is perhaps known under another name,
as it is pure elementary geometry.

\subsection{The Crescent Lemma}
Refer to Fig.~\figref{TwoCirclesProof} throughout the proof below.
\begin{lemma}
\lemlab{Crescent}
Let $\mathcal{C}(r)$ centered on $o$ intersect $C^-_L$ and $C^-_R$,
the two circles circumscribed and inscribed about $Z^-$. 
Let $S$ be the crescent shape between $C^-_L$ and $C^-_R$
(shaded blue in the figure).
Then $| \mathcal{C}(r) \cap S |$, the angle subtended from $o$ across $S$,
is a constant independent of $r$.
\end{lemma}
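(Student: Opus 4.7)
The plan is to parametrize the crescent-membership condition in polar coordinates about $o$ and expose a trigonometric cancellation that removes the $r$-dependence.

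First I would set up coordinates with $o$ at the origin and use the symmetry of $Z^-$ to pin down the two bounding circles. In the regular-polygon case, $Z^-$ is a banana-shaped region whose two cusps are $o$ and one other point $P$, and it inherits a reflection symmetry across the line $oP$ from the half-polygon structure. This reflection exchanges $\partial Z^-_L$ with $\partial Z^-_R$, and therefore exchanges $C^-_L$ with $C^-_R$; in particular both circles pass through $o$, both have the same radius $\rho$, and their centers $O_L, O_R$ both sit at distance $\rho$ from $o$. Let $\phi_L, \phi_R$ be the angular positions of $O_L, O_R$ as seen from $o$, and arrange $\phi_L > \phi_R$.

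Next I would invoke the elementary chord identity: for a circle of radius $\rho$ through $o$ with center in direction $\phi_0$, a ray from $o$ at angle $\phi$ re-enters the circle at distance $2\rho\cos(\phi - \phi_0)$. Hence a point with polar coordinates $(r,\phi)$ lies inside $D^-_L$ iff $r < 2\rho\cos(\phi - \phi_L)$, and analogously for $D^-_R$. Membership in $S = D^-_L \setminus D^-_R$ thus reduces to the double inequality
\[
\cos(\phi - \phi_L) \;>\; \frac{r}{2\rho} \;>\; \cos(\phi - \phi_R).
\]
Writing $\alpha = \alpha(r) = \arccos\!\bigl(r/(2\rho)\bigr)$, these conditions are $|\phi - \phi_L| < \alpha$ together with $|\phi - \phi_R| > \alpha$, whose intersection is the $\phi$-interval $(\phi_R + \alpha, \phi_L + \alpha)$ of length $\phi_L - \phi_R$. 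The two $\alpha$-shifts cancel exactly, so $|\mathcal{C}(r) \cap S| = \phi_L - \phi_R$, the angular separation at $o$ of the two centers --- a quantity independent of $r$.

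The main obstacle is the symmetry step that justifies ``both $C^-_L$ and $C^-_R$ pass through $o$ with the same radius'', equivalently that $o$ is a cusp of the crescent and lies on the axis of symmetry of $Z^-$. I would derive this from the half-regular-polygon structure of $Z^-$, after the reflection used for computational convenience, by identifying the two bounding circles explicitly in terms of the regular polygon $P_n$ used in the analysis of $Z^+$. Once the symmetry is in hand, the trigonometric computation above is a few lines, and it identifies the constant in the lemma concretely as the angle $\angle O_L\, o\, O_R$ subtended at $o$ by the two centers.
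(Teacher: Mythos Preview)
Your computation is clean, but the geometric premise it rests on is wrong. The point $o$ is the image of the \emph{north} pole and sits at the top of $Z^+$; the two cusps of $Z^-$ are the junction point $m$ between $Z^+$ and $Z^-$ and the south-pole image $o'$. Since $o,m,o'$ are three collinear points, $o$ cannot lie on either circle $C^-_L$ or $C^-_R$ (a circle meets a line in at most two points, and those two are already $m$ and $o'$). So the chord identity $r=2\rho\cos(\phi-\phi_0)$, valid only for circles through the origin, does not apply here, and your derivation of the interval $(\phi_R+\alpha,\phi_L+\alpha)$ collapses. You flagged exactly this step as ``the main obstacle,'' and indeed the obstacle is fatal as stated: the claim you propose to verify is false, not merely unproved.

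Your overall strategy can be rescued with the right hypothesis. What actually holds is that the \emph{centers} $O^-_L,O^-_R$ are equidistant from $o$. This follows because $C^-_L$ and $C^-_R$ share the two points $m,o'$, so both centers lie on the perpendicular bisector of $mo'$; that bisector is perpendicular to the vertical line through $o,m,o'$, hence $O^-_L$ and $O^-_R$ are mirror images across that vertical line, and in particular $|oO^-_L|=|oO^-_R|=:d$. Now redo your polar computation with the law of cosines: a point $(r,\phi)$ lies inside $D^-_L$ iff $\cos(\phi-\phi_L)>\tfrac{r^2+d^2-\rho^2}{2rd}$, and the identical threshold governs $D^-_R$. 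The same cancellation you found then gives $|\mathcal{C}(r)\cap S|=\phi_L-\phi_R$, independent of $r$.

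The paper's argument is more synthetic: it fixes the ray $A$ through $o$ and the cusp $C^-_L\cap C^-_R$, and uses a rotation about $o$ to transport the entry/exit rays $B,D$ onto fixed rays $A,C$. That argument and yours (once corrected) both hinge on the same hidden symmetry---that $C^-_L$ and $C^-_R$ are related by an isometry about $o$---but the paper leaves this implicit, whereas your polar-coordinate version makes it explicit and also identifies the constant as the angle $\angle O^-_L\,o\,O^-_R$.
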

\begin{proof}
Incident to $o$ we label four rays $A,B,C,D$.
The arc $ \mathcal{C}(r) \cap S$ subtends an angle $\b$ at $o$, between $B$ and $D$.
To simplify notation, we will use $\angle BD$ to indicate that angle, and similarly for the other
labeled rays.

We first define three of the four rays.
\begin{itemize}
\item $A$ is the vertical ray through $o$ and $C^-_L \cap C^-_R$.
\item $B$ is the ray through the point where $\mathcal{C}(r)$ crosses through $C^-_L$.
\item $D$ is the ray through the point where $\mathcal{C}(r)$ exits through $C^-_R$.
\end{itemize}
Let $\b=\angle BD$ and $\g=\angle AB$.
Define the fourth ray $C$ so that $\angle CD = \g$.

Rotate $BD$ rigidly clockwise about $o$ by $\g$.
By definition, this maps $B$ to $A$, and maps $D$ to $C$.
So now we see that $\b = \angle AC$.

Now consider any other $r'$ with $\mathcal{C}(r')$ intersecting $A$ and $C$.
Then the arc between $A$ and $C$ for this $r'$ is exactly $\b$,
for this arc is just an arc of the circle $\mathcal{C}(r')$,
concentric with $\mathcal{C}(r)$, delimited by $\angle AC$.
Therefore, $| \mathcal{C}(r) \cap S | = \b$ for all $r$ crossing the crescent.
\end{proof}
\begin{figure}[htbp]
\centering
\includegraphics[width=0.75\linewidth]{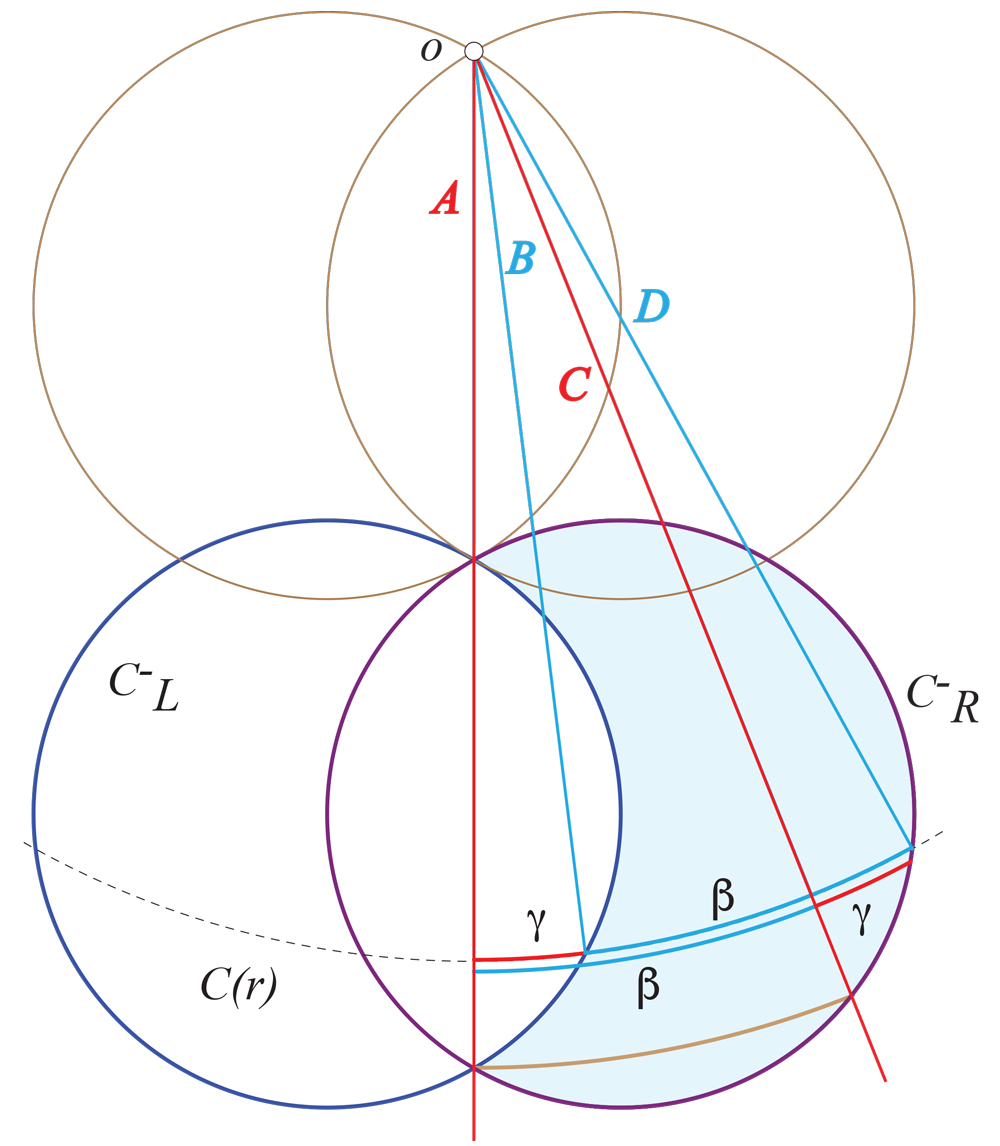}
\caption{$Z^-$ lies in the crescent between $C^-_L$ and $C^-_R$.
$\mathcal{C}(r)$ intersects the crescent in an arc of angle $\b$ independent of $r$.}
\figlab{TwoCirclesProof}
\end{figure}

We label the next result as a Proposition rather than a Lemma, because
we rely on a bound calculated from
an explicit equation.
\begin{prop}
\proplab{CrescentAlpha}
The angle $\b = \angle BD$ defined in
Lemma~\lemref{Crescent}
is strictly less than $\a/2$.
\end{prop}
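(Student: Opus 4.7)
The plan is to compute $\b$ in closed form using the constancy established in Lemma~\lemref{Crescent}, and then verify $\b < \a/2$ by direct trigonometry.

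First, I would set up coordinates after the vertical reflection described at the opening of the Appendix. Place $o$ at the origin with the ray $A$ along the positive $y$-axis; then the crescent $S$ lies to one side of $A$, and the intersection point $P^* = C^-_L \cap C^-_R$ sits on $A$ above $o$. Since $Z^-$ is a mirror image of $Z^+$, its two enclosing circles $C^-_L$ and $C^-_R$ are congruent to the circumscribing circle $C_R$ of Section~\secref{S-RegPoly}: each has radius $R = 1/(2\sin(\a/2))$ and passes through $n/2$ consecutive rhomb vertices of the corresponding boundary, with centers $O_L$ and $O_R$ placed symmetrically with respect to $A$. Their positions can be read off directly from the regular $n$-gon structure underlying the degenerate zone.

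Second, I would compute $\b$ using the constancy of Lemma~\lemref{Crescent} via the polar-integration identity
\[
\mathrm{Area}(S) \;=\; \int_{r_{\min}}^{r_{\max}} \b \cdot r \, dr \;=\; \frac{\b}{2}\bigl(r_{\max}^2 - r_{\min}^2\bigr),
\]
which rearranges to
\[
\b \;=\; \frac{2\,\mathrm{Area}(S)}{r_{\max}^2 - r_{\min}^2}.
\]
Here $r_{\min} = |oP^*|$, $r_{\max}$ is the distance from $o$ to the outermost point of $S$, and $\mathrm{Area}(S)$ is a standard difference of circular segments; all three quantities admit closed forms in $\a = 2\pi/n$. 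As a cross-check, one could instead evaluate $\b$ directly at a convenient $r$---for instance, the value at which $\mathcal{C}(r)$ is tangent to one of $C^-_L, C^-_R$, so that the tangent point is immediately located from the line through $o$ perpendicular to the radius of that circle---and confirm that the two computations agree.

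Finally, to show $\b < \a/2 = \pi/n$, the resulting closed form must be compared to $\pi/n$; after simplification this should reduce to a trigonometric inequality in the single variable $\a$. The main obstacle will be handling the algebraic complexity, since the area and radii formulas involve inverse trigonometric functions and square roots. I would aim either to (i) derive the needed monotonicity in $\a$ from standard bounds such as $\sin x < x$ and $x - x^3/6 < \sin x$, handling the small values of $n$ (where asymptotic bounds are weakest) by an explicit finite case-check; or, more elegantly, (ii) reinterpret $\b$ as the angle at $o$ spanned by a specific chord strictly shorter than any chord that would subtend $\a/2$, which would yield the inequality geometrically without heavy symbolic work.
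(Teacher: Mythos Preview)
Your direct-evaluation cross-check---computing $\b$ at one convenient radius using the constancy of Lemma~\lemref{Crescent}---is precisely the paper's route: it selects the vertically lowest arc, writes $\tan\a$ and $\tan\b$ explicitly in terms of the rhomb side-length $L$ (with the circles $C^-_L,C^-_R$ normalized to unit radius), and then plots $\b/\a$ against a continuous version of $n$. Notably, the paper does \emph{not} carry out the final inequality: it explicitly says ``a formal proof that $\b/\a < \tfrac{1}{2}$ is left for future work,'' resting the claim on the plot and a Taylor expansion showing the $n\to\infty$ limit is $1/3$. So your intention to close the gap via elementary bounds or a chord-length comparison would, if executed, go further than the paper itself; but as written, both your proposal and the paper leave that step as a plan rather than a proof.

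One caution on the polar-integration identity $\mathrm{Area}(S)=\tfrac{\b}{2}(r_{\max}^2-r_{\min}^2)$: it requires $|\mathcal{C}(r)\cap S|=\b$ for \emph{every} $r\in[r_{\min},r_{\max}]$, whereas Lemma~\lemref{Crescent} only asserts constancy when $\mathcal{C}(r)$ meets both boundary circles $C^-_L$ and $C^-_R$. If the point of the lune farthest from $o$ lies on the outer boundary arc rather than at the far cusp, then for $r$ near $r_{\max}$ the circle $\mathcal{C}(r)$ enters and exits through the \emph{same} boundary circle, the lemma's hypothesis fails, and the angular span there need not equal $\b$. In that regime the area identity mis-computes $\b$; worse, it would give an underestimate, which is useless for proving $\b<\a/2$. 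You would need to verify that in this particular configuration the two cusps of the lune realize $r_{\min}$ and $r_{\max}$, or else restrict the integral to the sub-range where both boundaries are crossed and handle the residual cap separately. The single-radius evaluation sidesteps this entirely, which is why the paper takes that path.
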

\begin{proof}
Because all the arc $| \mathcal{C}(r) \cap S |$ are the same length $\b$,
it suffice to measure the length of one, say the vertically lowest arc
shown in Fig.~\figref{TwoCirclesProof}.
We would like to compare this $\b$ to $\a$, which is the counterpart
in $Z^+$.
These angles depend on $n$, which determines the length of each rhomb side.
(Here we use unit radius circles $C^-_L$ and $C^-_R$, so the rhomb sides have
length $L$ as a function of $n$.)
Even though $n$ is a natural number, in the context of circles replacing the discrete rhombs,
we can calculate that
\begin{align*}
\a &= 2 \pi / n \\
L &= 2 \sin( \a/2 ) \\
n &= \pi / \arcsin( L/2 )
\end{align*}
The last equation turns $n$ into a continuous variable.

Using this, I calculated the ratio $\b / \a$ as a function of this continuous $n$.
The result is shown in Fig.~\figref{AlphaRatio}.
\begin{figure}[htbp]
\centering
\includegraphics[width=0.75\linewidth]{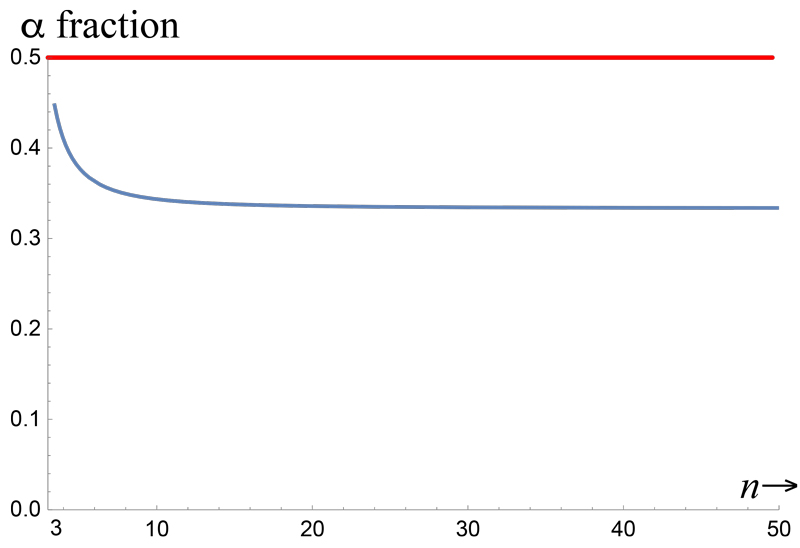}
\caption{The ratio $\b/\a$ for different values of $n$.}
\figlab{AlphaRatio}
\end{figure}
For all $n \ge 3$, $\b / \a < \frac{1}{2}$.
It is evident that as $n \to \infty$, the fraction diminishes to $\frac{1}{3}$,
and as $n$ gets small, the fraction increases but does not reach $\frac{1}{2}$ by $n=3$,
the smallest $n$.\footnote{For $n=3$, the polar zonohedron is a combinatorial cube.}

A formal proof that $\b / \a < \frac{1}{2}$ is left for future work, but in light
of this plot and the equations below, there can be little doubt it holds.
\end{proof}

We have established Lemma~\lemref{RegPolyLower}:
$| \mathcal{C}(r) \cap Z^- | = \b < \a/2$.

\subsection{Equations}
Using $L$ to represent the length of a rhomb side when
the radii of $C^-_L$ and $C^-_R$ are both $1$,
the equations that lead to the plot in Fig.~\figref{AlphaRatio}
are explicitly:

\begin{equation}
\tan \a = \frac{L \sqrt{4-L^2}}{2-L^2} \;.
\end{equation}

\begin{equation}
\tan \b =  \frac{3 \left(L^4-4 L^2-\sqrt{4-L^2} \sqrt{L^4 \left(4-L^2\right)}\right)}
   {L \left(9 \sqrt{4-L^2}
   L^2-36
   \sqrt{4-L^2}+\sqrt{L^4 \left(4-L^2\right)}\right)}
\;.
\end{equation}

\begin{equation}
\b/\a = \frac{\arctan(\tan \b)}{ \arctan (\tan \a)} \;.
\end{equation}

Calculations with these equations verify that the graph shown in Fig.~\figref{AlphaRatio} is accurate.
In particular,
a Taylor-series expansion of the ratio $\tan \b / \tan \a$
results in an expression that 
verifies that, as $n \to \infty$ (and so $L \to 0$),
the ratio approaches $\b/\a$ and its limit is indeed $1/3$:
\begin{equation}
\lim_{L \to 0}\frac{\sqrt{L^4}+L^2}{6 L^2} = \frac{1}{3} \;.
\end{equation}

\newpage
\bibliographystyle{alpha}
\bibliography{refs}
\end{document}